\documentclass{article}
\usepackage{amsmath,amssymb,amsfonts,amsthm, fullpage}

\newtheorem{lemma}{Lemma}
\newtheorem{definition}{Definition}
\newtheorem{observation}{Observation}
\newtheorem{theorem}{Theorem}

\newcommand{\dest}{\textrm{dest}}
\newcommand{\ball}{\textbf{b}}
\newcommand{\randi}{\textbf{i}}
\newcommand{\eps}{\varepsilon}

\newcommand{\Ins}{\mathcal{I}}
\newcommand{\Pairs}{\mathcal{Q}}
\newcommand{\Count}{\mathbf{X}}
\newcommand{\Rep}{\mathcal{R}}

\newcommand{\alphaval}{(S w \lg^{18} \lg n)/(n \lg n)}

\newcommand{\RepCells}{\mathcal{M}}
\newcommand{\E}{\mathbb{E}}
\renewcommand{\log}{\lg}

\title{Towards Tight Lower Bounds for Range Reporting on the RAM}


\author{Allan Gr\o nlund\thanks{Aarhus University. \texttt{jallan@cs.au.dk}. Supported by Center for Massive Data Algorithmics, a Center of the Danish National Research Foundation, grant DNRF84.} \and Kasper Green Larsen\thanks{Aarhus University. \texttt{larsen@cs.au.dk}. Supported by Center for Massive Data Algorithmics, a Center of the Danish National Research Foundation, grant DNRF84.}}

\pagestyle{plain}
\begin{document}
\maketitle

\begin{abstract}
In the orthogonal range reporting problem, we are to preprocess a set
of $n$ points with integer coordinates on a $U \times U$ grid. The
goal is to support reporting all $k$ points inside an axis-aligned
query rectangle. This is one of the most fundamental data structure
problems in databases and computational geometry.  Despite the
importance of the problem its complexity remains unresolved in the
word-RAM.
On the upper bound side, three best tradeoffs exists:
\begin{enumerate}
\item Query time $O(\lg \lg n + k)$ with $O(n \lg^{\eps} n)$
  words of space for
  any constant $\eps>0$.
\item Query time $O((1 + k) \lg \lg n)$ with $O(n \lg \lg n)$ words of
  space.
\item Query time $O((1+k)\lg^{\eps} n)$ with optimal $O(n)$ words of space.
\end{enumerate}
However, the only known query time lower bound is $\Omega(\log \log n
+k)$, even for linear space data structures. 

All three current best upper bound tradeoffs are derived by reducing range
reporting to a \emph{ball-inheritance} problem. Ball-inheritance is a
problem that essentially encapsulates all previous attempts at solving
range reporting in the word-RAM. In this paper we make progress
towards closing the gap between the upper and lower bounds for range
reporting by proving cell probe lower bounds for ball-inheritance. Our
lower bounds are tight for a large range of parameters, excluding any
further progress for range reporting using the ball-inheritance
reduction.

\end{abstract}

\thispagestyle{empty} \setcounter{page}{0}
\newpage

\section{Introduction}
In the orthogonal range reporting problem, we are to preprocess a set
of $n$ points with integer coordinates on a $U \times U$ grid. The
goal is to support reporting all $k$ points inside an axis-aligned
query rectangle. This is one of the most fundamental data
structure problems in databases and computational geometry. Given the
importance of the problem, it has been extensively studied in all the
relevant models of computation including e.g. the word-RAM, pointer
machine and external memory model. In the latter
two models, we typically work under an assumption of
\emph{indivisibility}, meaning that input points have to be stored as
they are, i.e. compression techniques such as rank-space reduction and word-packing
cannot be used to reduce the space consumption of data structures. The
indivisibility assumption greatly alleviates the task of proving lower
bounds, which has resulted in a completely tight characterisation of
the complexity of orthogonal range reporting in these two
models. More specifically, Chazelle~\cite{Chazelle.filtering.search} presented a pointer
machine data structure answering queries in optimal $O(\lg n+k)$ time
using $O(n \lg n/ \lg \lg n)$ space and later proved that this space
bound is optimal for any query time of the form $O(\lg^c n+k)$, where
$c \geq 1$ is an arbitrary constant~\cite{Chazelle.LB.reporting}. In the external memory
model, Arge et al.~\cite{arge:indexandrange} presented a data structure answering
queries in optimal $O(\lg_B n + k/B)$ I/Os with $O(n \lg n/ \lg \lg_B
n)$ space and also proved the space bound to be optimal for any query
time of the form $O(\lg^c_B n + k/B)$, where $c \geq 1$ is a
constant. Here $B$ is the disk block size. Thus the orthogonal range
reporting problem has been completely closed for at least 15 years in
both these models of computation. If we instead abandon the
indivisibility assumption and
consider orthogonal range reporting in the arguably more
realistic model of computation, the word-RAM, our understanding of the problem is much
more disappointing. Assuming the coordinates are polynomial in $n$ ($U = n^{O(1)}$), the current best word-RAM data structures,  by Chan et al.~\cite{patrascu11range}, achieve
the following tradeoffs:
\begin{enumerate}
\item Optimal query time $O(\lg \lg n + k)$ with $O(n \lg^{\eps} n)$
  words of space for
  any constant $\eps>0$.
\item Query time $O((1 + k) \lg \lg n)$ with $O(n \lg \lg n)$ words of
  space.
\item Query time $O((1+k)\lg^{\eps} n)$ with optimal $O(n)$ words of space.
\end{enumerate}
Thus we can achieve linear space by paying a $\lg^\eps n$ penalty per
point reported. And even if we insist on an optimal $O(\lg \lg n +
k)$ query time, it
is possible to improve over the optimal space bound in the pointer machine
and external memory model by almost a $\lg n$ factor. Naturally the
improvements rely heavily on points not being indivisible.

On the lower bounds side, P{\v a}tra{\c s}cu and
Thorup~\cite{patrascu06pred,patrascu07randpred} proved that
the query time must be $\Omega(\lg \lg n+k)$ for space $n \lg^{O(1)}
n$. This lower bound was obtained by reduction from
the predecessor search problem. For predecessor search, the query time
of $\lg \lg n$ is
known to be achievable with linear space. Thus the
reduction is incapable of distinguishing the three space regimes
of the current best data structures for range reporting. Perhaps it
might just be possible to construct a linear space data structure with
$O(\lg \lg n+ k)$ query time. This would have a huge impact in
practice, since the non-linear space solutions are most often
abandoned for the
kd-trees~\cite{bentley75:kd_tree}, using linear space and answering
queries in $O(\sqrt{n}+k)$ time. This is simply
because more than a constant factor above linear space is prohibitive
for most applications. Thus ruling out the existence of fast linear space
data structures would be a major contribution. The focus of this paper
is on understanding this gap and
the complexity of orthogonal range reporting in the word-RAM. This boils down to understanding how much
compression and word-packing techniques can help us in the regime between
linear space and $O(n \lg^{\eps} n)$ space. Since our
results concern definitions made by Chan et al.~\cite{patrascu11range}, we first give a more formal definition of the word-RAM and
briefly review the technique of rank space reduction and the main ideas in~\cite{patrascu11range}. 

\subsection{Range Reporting in the word-RAM}
The word-RAM model was designed to mimic what is possible in modern
imperative programming languages such as C. In the word-RAM, the
memory is divided into words of $\Theta(\lg n)$ bits. The words have
integer addresses and we allow random access to any word in constant
time. We also assume all standard word operations from modern
programming languages takes constant time. This includes
e.g. integer addition, subtraction, multiplication, division, bit-wise
AND, OR, XOR, SHIFT etc. Having $\Theta(\log n)$ bit words is a
reasonable assumption since machine words on standard computers have enough bits to
address the input and to store pointers into a data
structure.

\paragraph{Rank Space Reduction.}
Most of the previous range reporting data structures for the word-RAM
have used rank space reduction (or variants thereof) to save space, see
e.g.~\cite{Brodal00h, Nekrich:linear}. Rank space reduction is
the following: Given a set $P$ of $n$ points on a $U \times U$ grid,
compute for each point $(x,y) \in P$ the rank $r_x(x)$ of $x$ amongst the
$x$-coordinates of points in $P$. Similarly compute the rank $r_y(x)$ of
$y$ amongst the $y$-coordinates of points in $P$. Construct a new
point set $P^*$ with each point $(x,y) \in P$ replaced by
$(r_x(x),r_y(y))$. The point set $P^*$ is said to be in \emph{rank
  space}. A point $(x,y) \in P$ lies inside a query range $q = [x_0 ; x_1]
\times [y_0 ; y_1]$ precisely if $(r_x(x),r_y(y))$ lies inside the
range $q^* = [r_x(x_0) ; r_x(x_1)] \times [r_y(y_0) ;
r_y(y_1)]$. Thus if we store a data structure for mapping $q$ to $q^*$
and a table mapping points in $P^*$ back to
points in $P$, the output of a query $q$ can be computed from the
output of the
query $q^*$ on $P^*$. Since the coordinates of a point in $P^*$ can be
represented using $\lg n$ bits, this gives a saving in space if $\lg n
\ll \lg U$. 

In previous range reporting data structures, rank space reductions are
often used recursively on smaller and smaller point sets $P_t \subset
P_{t-1} \subset \cdots \subset P_1 \subset P$. Applying $t$ rounds of
rank space reduction however results in a query time
of $O(f(n) + t k)$ since each reported point has to be
\emph{decompressed} through $t$ rank space reduction tables.

\paragraph{The Ball-Inheritance Problem.}
In the following, we present the main ideas of the current best data
structures, due to Chan et al.~\cite{patrascu11range}. Their solution is based on an
elegant way of combining rank space reductions over all levels of a
range tree:

Construct a complete binary tree with the
points of $P$ stored in the leaves ordered from left to right by their
$x$-coordinate. Every internal node $v$ is associated with the subset
of points $P_v$ stored in the leaves of the subtree rooted at $v$. For
every internal node $v$, map the points $P_v$ to rank space and denote
the resulting set of points $P^*_v$. Store in $v$ a data structure for
answering 3-sided range queries on $P^*_v$. Here a 3-sided query is
either of the form $[x_0 ; \infty) \times [y_0 ; y_1]$ or $(-\infty,
x_1] \times [y_0 ; y_1]$. If we require that only the rank space
$y$-coordinate of a point is reported (and not the rank space
$x$-coordinate), these 3-sided data structures can be
implemented in $O(n)$ bits and with $O(k)$ query time using succinct data
structures for range minimum queries, see e.g.~\cite{fischer10rmq}. For each leaf,
we simply store the associated point. The total space usage is $O(n
\lg n + n\lg U)$ bits, which is $O(n)$ words.

To answer a query $q = [x_0 ;
x_1] \times [y_0 ; y_1]$, find the lowest common ancestor, $w$, of the
leaves storing the successor of $x_0$ and the predecessor of $x_1$
respectively. Let $w_\ell$ be the left child of $w$ and $w_r$ the
right child. The points inside $q$ are precisely the points
$P_{w_\ell} \cap [x_0 ; \infty) \times [y_0 ; y_1]$ plus 
$P_{w_r} \cap (-\infty, x_1] \times [y_0 ; y_1]$. The data structures of Chan
et al. now proceeds by mapping these two 3-sided queries to rank space
amongst points in $P_{w_\ell}^*$ and $P_{w_r}^*$ respectively and
answering the two queries using the 3-sided data structures stored at
$w_\ell$ and $w_r$. This
reports, for every point $(x,y) \in P_{w_\ell} \cap q$ (and $(x,y) \in
P_{w_r} \cap q$), the rank of $y$ amongst the $y$-coordinates of all
points in $P_{w_\ell}$ ($P_{w_r}$). Assuming one can build an $S$ word
auxiliary data structure that supports mapping these rank space
$y$-coordinates back to the original points in $t$ time per point (i.e. through
$t$ rank space decompressions), this gives a data structure for orthogonal
range reporting that answers queries in $O(\lg \lg n + t(1+k))$ time
using $S+O(n)$ space, see~\cite{patrascu11range} for full details. Chan et al. named this abstract decompression
problem \emph{the ball-inheritance problem} and defined it as
follows:

\begin{definition}[Chan et al.~\cite{patrascu11range}]
In the ball-inheritance problem, the input is a complete binary tree
with $n$ leaves. In the root node, there is an ordered list of $n$
balls. Each ball is associated with a unique leaf of the
tree and we say the ball \emph{reaches} that leaf. Every internal node $v$ also has an associated list of balls,
containing those balls reaching a leaf in the subtree
rooted at $v$. The ordering of the balls in $v$'s list is the same as
their ordering in the root's list. We think of each ball in $v$'s list
as being inherited from $v$'s parent.

A query is specified by a pair $(v,i)$ where $v$ is a node in the tree
and $i$ is an index into $v$'s
list of balls. The goal is to return the index of the leaf reached by
the $i$'th ball in $v$'s list of balls.
\end{definition}

It is not hard to see that a solution to the ball-inheritance problem
is precisely what is needed in Chan et al.'s data structures: We have
one ball per point. The ball corresponding to a point $(x,y)$ reaches
the $r_x(x)$'th leaf, where $r_x(x)$ is the rank of $x$ amongst all
$x$-coordinates. The ordering of the balls inside the lists is just the ordering on
the $y$-coordinates of the corresponding points. Thus answering a
ball-inheritance query $(v,i)$ corresponds exactly to
determining the leaf storing the point from $P_v$ having a rank space
$y$-coordinate of $i$. Since Chan et al. stored
the points in the leaves, this also recovers the original point.

All three tradeoffs by Chan et al. come from solving the
ball-inheritance problem with the following bounds:

\begin{theorem}[Chan et al.~\cite{patrascu11range}]
\label{thm:ballupper}
For any $2 \leq B \leq \lg^\eps n$, we can solve the ball-inheritance
problem with: (1) space $O(nB\lg \lg n)$ and query time $O(\lg_B \lg
n)$; or (2) space $O(n \lg_B \lg n)$ and query time $O(B \lg \lg n)$.
\end{theorem}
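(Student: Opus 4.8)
The plan is to build the data structure recursively on the $\lg n$ levels of the tree, with $B$ acting as a knob governing how aggressively we skip levels. As a warm-up, note that ball-inheritance already has an $O(n)$-word, $O(\lg n)$-time solution: at every node $v$ store, for each ball in $v$'s list, one bit telling whether it is inherited into $v$'s left or right child, together with rank/select structures on this bit string; a query $(v,i)$ is then answered by walking down to a leaf one level at a time, using a rank query at each node to translate the current index into the index in the appropriate child. The bit strings occupy $\Theta(n\lg n)$ bits and the rank/select machinery $o(n\lg n)$ bits, i.e.\ $O(n)$ words, and each of the $\le \lg n$ steps is $O(1)$. Everything below aims to replace this level-by-level descent by $O(\lg_B\lg n)$ (respectively $O(B\lg\lg n)$) much longer jumps.

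The common skeleton is a recursive \emph{checkpoint} decomposition of the tree's levels. Split the height $\lg n$ into $B$ pieces of $(\lg n)/B$ levels each by declaring the levels that are multiples of $(\lg n)/B$ to be \emph{checkpoint levels}; at each node on a checkpoint level, store, for each ball in its list, a short record (described below) used to jump down through the current piece. Then recurse on every piece --- i.e.\ on each height-$(\lg n)/B$ cluster of subtrees hanging below a checkpoint level --- subdividing it again into $B$ sub-pieces, and so on, until the pieces have a single level. The recursion has depth $\lg_B\lg n$, and at construction level $d$ the newly introduced checkpoint levels carry $\Theta(B^{d+1}n)$ ball-slots in total while the pieces they subdivide span $(\lg n)/B^d$ levels. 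Crucially, the recursive subproblem is a mild generalisation of ball-inheritance in which each leaf of a subtree-cluster inherits a whole sublist of balls (those reaching it in the original tree) and a query returns, besides the leaf reached, the \emph{index} of the ball in that leaf's sublist --- this is forced, since the index produced by one recursion level is the query index fed into the next. A query $(v,i)$ is answered by locating the coarsest construction level $d^*$ at which $v$'s level is a checkpoint, reading $v$'s record to jump to the node reached by the ball at the relevant coarser checkpoint, recomputing the ball's index there by a rank query on $v$'s records, and iterating up the recursion; the number of such jumps is controlled by the recursion depth.

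What distinguishes the two tradeoffs is the record stored at a construction-level-$d$ checkpoint, governed by a space/time knob. For tradeoff~(1), each record points directly to the node at the bottom of the containing depth-$d$ piece reached by the ball, paired with a rank structure recovering the index at that node in $O(1)$ time; the label is $O((\lg n)/B^d)$ bits, so construction level $d$ costs $O(B^{d+1}n\cdot(\lg n)/B^d)=O(nB)$ words, totalling $O(nB\lg_B\lg n)=O(nB\lg\lg n)$ words, and each query jump is $O(1)$, for query time $O(\lg_B\lg n)$. For tradeoff~(2), each record only names the ball's destination at the next checkpoint one sub-piece lower, with a minimal rank structure, so a construction level costs only $O(n)$ words and the total is $O(n\lg_B\lg n)$ words, at the price of needing $\Theta(B)$ jumps to traverse each piece, hence query time $O(B\lg_B\lg n)=O(B\lg\lg n)$. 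Both bounds of Theorem~\ref{thm:ballupper} are thus two settings of one construction with a time--space knob inside each recursion level.

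I would carry this out in the order: (i) verify the warm-up structure; (ii) set up the recursion, formulating the generalised ``leaves carry sublists'' subproblem so that indices, not merely leaves, propagate across levels; (iii) bound the space, checking that each construction level contributes the same order of words independent of its depth, so that only a single $\lg\lg n$ factor survives; (iv) bound the query time; (v) supply the rank structures that recover indices within the allotted space and time. I expect step~(v), together with the bookkeeping in step~(ii), to be the main obstacle: at the coarse construction levels the navigation-label alphabet is large --- up to $\mathrm{poly}(n)$ --- so obtaining (near-)constant-time rank within $O(\ell\lg\sigma)$ bits, and threading the recovered indices through the recursion without the ``carry'' information costing an extra $\lg n$ bits per ball at every level (which would wreck the space bound), is where essentially all of the technical care lies.
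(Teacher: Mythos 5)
First, note that the paper you are comparing against does not actually prove Theorem~\ref{thm:ballupper}: it is imported verbatim from Chan et al.~\cite{patrascu11range}, so there is no in-paper proof to match; your proposal has to be judged against the known construction. At the architectural level you have reconstructed that construction correctly: a laminar (base-$B$) hierarchy of checkpoint levels, left/right bit vectors with rank for single-level descent, skip records at checkpoints whose length is tuned by $B$, and the two tradeoffs arising as two settings of the jump-length knob. The space accounting per construction level and the counting of jumps are essentially right (modulo one slip in tradeoff~(2): if each checkpoint only carries a record for its \emph{coarsest} granularity jumping one sub-piece of its own piece, the descent does not cross a piece in $B$ jumps; you need, at every level of granularity $j$, a record of length matching granularity $j$, which restores the $O(n)$ words per granularity and $O(B\lg_B\lg n)\le O(B\lg\lg n)$ jumps).

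The genuine gap is the step you yourself flag as step~(v), and it is larger than your write-up suggests. For tradeoff~(1) you assert that each record can be ``paired with a rank structure recovering the index at that node in $O(1)$ time'' within $O((\lg n)/B^d)$ bits per ball. A generic rank structure cannot do this: the navigation labels at a depth-$d$ checkpoint form a sequence over an alphabet of size $2^{\Theta((\lg n)/B^d)}$, i.e.\ up to $n^{\Omega(1)}$ at the coarse levels, and in the space regime $n\cdot\mathrm{poly}(w)$ general rank over alphabet $\sigma$ requires $\Omega(\lg(\lg\sigma/\lg w))$ time (it embeds predecessor search), while compact wavelet-tree-type structures give $\Theta(1+\lg\sigma/\lg\lg n)$ time per query --- either of which destroys the $O(\lg_B\lg n)$ total. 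Storing the destination indices explicitly instead costs up to $\lg n$ bits per ball at shallow destination nodes and wrecks the space bound. What saves the construction is that you never need general rank, only the rank of the query ball's \emph{own} label at its \emph{own} position (``partial rank''), because the relative order of balls is preserved along the tree; this restricted query can be supported in $O(1)$ time with only $O(\lg\lg\sigma)$ extra bits per entry (monotone-perfect-hashing-style structures), which fits the budget and is exactly the ingredient that makes tradeoff~(1) work. Without identifying this (or an equivalent device for threading indices across long jumps), the proposal does not yet yield the claimed bounds; with it, your plan does go through along the same lines as Chan et al.
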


While not all previous range reporting data structures directly solve
the ball-inheritance problem, they are all based on rank space
reductions and decompression of one point at a time, just in less
efficient ways. Thus the
ball-inheritance problem in some sense captures the essence of all
previous approaches to solving range
reporting and the bounds obtained for the ball-inheritance problem
also sets the current limits for orthogonal range reporting.

We remark
that the ball-inheritance problem also has been used to improve the
upper bounds for various other
problems with a range reporting flavour to them, see
e.g.~\cite{chan:adaptive, brodal:skyline}. Thus
in light of the lack of progress in proving tight lower bounds for range
reporting, it seems like a natural goal to understand the complexity
of the ball-inheritance problem.

\subsection{Our Results}
In this paper, we prove a lower bound for the ball-inheritance
problem. Our lower bound is tight for a large range of parameters and
is as follows:

\begin{theorem}
\label{thm:rammain}
Any word-RAM data structure for the ball-inheritance problem which uses $S$
words of space, must have query time $t$ satisfying:
$$
t = \Omega\left(\frac{\lg \lg n}{\lg(S/n) + \lg \lg \lg n} \right)
$$
\end{theorem}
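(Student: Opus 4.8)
The plan is to use the cell-sampling / chronogram-style approach to data structure lower bounds, specialized to the tree structure of the ball-inheritance problem. Fix a word-RAM data structure using $S$ words of $w = \Theta(\lg n)$ bits each, answering queries in time $t$. We set up a hard distribution over inputs: for each of the $n$ balls in the root, pick its destination leaf independently and uniformly at random, which requires $\Theta(n \lg n)$ bits of entropy to specify. For a query $(v,i)$ with $v$ at depth $\ell$ in the tree, the answer reveals (roughly) which of the $2^{\ell}$ descendant-leaf ``slots'' below $v$ the $i$'th ball in $v$'s list goes to — about $(\lg n - \ell)$ bits of information, but conditioned on $v$ and the prefix of coordinates already known; summing appropriately over all $n$ leaves at a fixed depth $\ell$ the answers collectively encode $\Theta(n (\lg n - \ell))$ bits, and across all $\lg n$ levels they encode $\Theta(n \lg^2 n)$ bits in total, vastly more than the $Sw = O(S \lg n)$ bits of the data structure. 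So the data structure must be ``reused'' heavily across queries, and we extract a contradiction by a cell-sampling argument.

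Concretely, I would partition the levels of the tree into $\Theta(\lg \lg n / \lg(S/n + \lg\lg n))$ consecutive \emph{bands}, each of geometrically increasing or decreasing width chosen so that the number of queries ``native'' to one band, together with the per-query cell cost $t$, is balanced against $S$. For a fixed band, consider the set of all queries $(v,i)$ with $v$ in that band; there are $\Theta(n \cdot (\text{band width}))$ such queries relevant to recovering the destinations. Sample a random subset of the $S$ memory cells by including each cell independently with probability $p$, for $p$ chosen so that the expected sampled size is a bit below $n$; call the resulting set $\mathcal{M}$. A query whose entire length-$t$ cell-probe path lies inside $\mathcal{M}$ can be answered from $\mathcal{M}$ alone, and by a Markov/averaging argument a $p^t$ fraction of queries have this property. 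If $p^t$ is not too small, the number of such ``resolved'' queries exceeds $|\mathcal{M}| \cdot w$ divided by the per-query information content — so the resolved queries' answers, which by the independence of the coordinates carry close to full entropy given the band prefix, cannot be encoded in $|\mathcal{M}|$ cells. This is the desired contradiction. Solving $p^t \cdot n \gtrsim |\mathcal{M}|$ with $|\mathcal{M}| \approx pS$ forces $p^{t-1} \gtrsim S/n$, hence $t = \Omega(\lg(n/|\mathcal{M}|)/\lg(1/p))$; optimizing $p$ against the band structure yields the claimed $t = \Omega\!\big(\lg\lg n / (\lg(S/n) + \lg\lg\lg n)\big)$ bound.

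The technically delicate part — and the main obstacle — is the conditioning. When we argue that the sampled cells cannot encode the answers of all resolved queries, we must be careful that the resolved queries in a given band genuinely carry near-independent information \emph{conditioned on everything revealed by other bands and by the published/non-sampled parts of the structure}. The tree structure causes dependencies: the destination of a ball, as seen from a deep node $v$, is partly determined by which ancestors' lists it appeared in, i.e.\ by shallower-band information. The fix is the standard one for hierarchical lower bounds: process bands from shallow to deep, at each stage \emph{revealing for free} the entire contents of all cells probed by queries in shallower bands (there are few enough of these that the information given away is a lower-order term), so that within the current band the residual entropy of each ball's destination — the bits distinguishing leaves inside the relevant depth-$(\ell_{\text{shallow}})$ subtree — is still essentially uniform and independent across balls. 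Making this ``epoch'' accounting work cleanly against the geometric band widths, and checking that the encoding argument's slack survives the union over $\lg n$ levels and the sampling probability losses, is where the real work lies; the rest is a routine cell-sampling computation.
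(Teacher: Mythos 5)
There is a genuine gap, and it is exactly the obstacle the paper's machinery is built to overcome. Your plan rests on the step ``a $p^t$ fraction of queries have their whole probe path in the sample $\mathcal{M}$, and the answers of these resolved queries carry close to full entropy.'' But in the ball-inheritance problem all $\lg n$ queries along a ball's root-to-leaf path return the \emph{same} answer, and nothing stops the data structure from routing all of them through the \emph{same} $t$ cells --- indeed Chan et al.'s upper bound essentially does this. Against such a structure, resolved queries cluster on few distinct balls: the number of fresh bits you recover is governed by the number of distinct balls resolved, not the number of resolved queries, and redoing your arithmetic with $n$ effective queries (each worth $\lg n$ bits) instead of $n\lg n$ queries yields no super-constant lower bound at all. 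The paper's proof is organized around precisely this issue: it works in the published-bits/accepted-queries model, iteratively applies a Probe Elimination Lemma (publishing a small random cell sample each round and rejecting queries whose next probe misses it, with a ``zoom level'' that grows by a factor $\alpha$ per round), and the heart of the argument --- the scattering lemma proved via the shortcut-edge encoding --- shows that with few published bits, queries far apart in the tree that answer the same leaf must mostly make \emph{distinct} first probes, because otherwise the input could be compressed below $\lg(n!)$ bits (a distant same-answer, same-cell pair of queries can be encoded in $O(\lg w)$ bits and replaces $\Omega(\lg w)$ regular edges). Your proposal contains nothing playing this role, and without it the ``freshness'' claim that drives the contradiction is simply false for the data structures one must beat.

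A second, related problem is the information accounting and the proposed epoch-style fix. Since the answer to any query $(v,i)$ is the full leaf index, the answers at the root level already determine every other answer; the total entropy of all $n\lg n$ answers is $\lg(n!)=\Theta(n\lg n)$, not $\Theta(n\lg^2 n)$, and conditioned on shallower bands' answers the deeper bands carry \emph{zero} residual entropy --- the dependence you call ``partial'' is total. Consequently ``revealing for free all cells probed by queries in shallower bands'' is not a lower-order giveaway (the shallowest band contains the $n$ root queries, whose probes may touch up to $nt$ cells, i.e.\ potentially the whole store of $\Theta(S\lg n)$ bits), and to the extent it succeeds it destroys exactly the entropy your band is supposed to certify. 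The paper avoids this trap by never conditioning on answers of other levels: its final encoding pays $\lg n/2$ bits per ball explicitly for the top half of the path and uses surviving (zero-probe, accepted) queries in the top $\lg n/2$ levels only to recover each ball's rank, with the level-support quantity $L^Y(I)$ tracking, per leaf and per layer of levels, whether some accepted query still answers that leaf. Your high-level cell-sampling arithmetic does land on the right final expression, but the missing ingredient --- forcing distant same-answer queries to spread over distinct cells (or deriving a compression contradiction when they do not) --- is the actual content of the theorem.
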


Comparing to the ball-inheritance upper bounds of Chan et al. (Theorem~\ref{thm:ballupper}), we see that this essentially matches their
first tradeoff and is tight for any $S = \Omega(n \lg^{1+\eps} \lg n)$
where $\eps>0$ is an arbitrarily small constant. Most importantly, it
implies that for constant query time, one needs space $n \lg^{\eps}n$
words. Thus any range reporting data structure based on the
ball-inheritance problem cannot improve over the bounds of Chan et
al. in the regime of space $S = \Omega(n \lg^{1 + \eps} \lg n)$
words. We believe this holds true for any data structure that is based
on \emph{decompressing} one point at a time from some subproblem in
rank space. Since decompressing from a subproblem in rank space is hard to formalize exactly, we leave
it at this.

One can view our lower bound in two ways: Either as a strong indicator
that the data structure of Chan et al. is optimal, or as a suggestion
for how to find better upper bounds. The lower bound above shows that
if we want to develop faster data structures, we have to find a
technique that in some sense allows us to decompress $\omega(1)$
points in one batch, faster than decompressing each point in turn. This is not necessarily impossible given the
large success of batched evaluations in other problems such as matrix
multiplication and multipoint evaluation of polynomials.

We also want to make a remark regarding the gap between the second
tradeoff of Chan et al. and our lower bound. We conjecture that the
upper bound of Chan et al. is tight, but note that current lower bound
techniques (in the cell probe model) are incapable of proving any lower
bounds exceeding the one we obtain in Theorem~\ref{thm:rammain}:
The ball-inheritance problem has only $n \lg n$ queries and the
strongest lower bound for any data structure problem with $m$ queries
(for any $m$) is $t = \Omega(\lg(m/n)/\lg(S/n))$~\cite{larsen:staticloga}, thus apart from our
$\lg \lg \lg n$ ``dirt factor'', our lower bound is as strong as it
possibly can be with current techniques.

\paragraph{Technical Contributions.}
As a side remark, we believe our lower bound proof has interest from a
purely technical point of view. In the lower bound proof, we carefully
exploit that a data structure is \emph{not}
non-deterministic. While this might sound odd at first, Wang and
Yin~\cite{wang:nondeter} recently showed that, with only few
exceptions (e.g. the predecessor
lower bounds), all previous
lower bound techniques yield lower bounds that
hold non-deterministically. Thus having a new proof outside this
category is an important contribution and may hopefully help in
closing fundamental problems where avoiding non-determinism in proofs is
crucial. This is e.g. the case for the deterministic dictionaries problem, which is
amongst the most fundamental open problems in the field of data
structures. This problem is trivially solved with constant update time
and query time non-deterministically
(just maintain a sorted linked list) and hence lower bound proofs must use
ideas similar to those we present in this paper to prove super
constant lower bounds for this important problem.

\section{Lower Bound Proof}
\label{sec:lower_bound_proof}
We prove our lower bound in the cell probe model~\cite{yao:cellprobe},
where the complexity of a data structure is the number of cells it
reads/probes. More specifically, a data structure with query time $t$ and
space $S$ consists of memory of $S$
cells with consecutive integer addresses $0,\dots,S-1$. Each cell
stores $w$ bits and we assume $w=\Omega(\log
n)$. When answering a query, the data structure may probe up to
$t$ cells and must announce the answer to the query solely based on
the contents of the probed cells. The cell to probe in each step may
depend arbitrarily on the query and the contents of previously probed
cells. Thus computation is free of charge in the cell probe model and
lower bounds proved in this model clearly applies to word-RAM data
structures.

\subsection{Main Ideas}
\label{sec:ideas}
In the following, we sketch the overall approach in our proof. Assume
we have a data structure for the ball-inheritance problem, having
space $S$ cells of $w$ bits and with query time $t$. Assume
furthermore that the data structure performs very poorly in the
following sense: For every input $I$ to the ball-inheritance
problem and every leaf index $b \in [n] = \{0,\dots,n-1\}$, let $Q(b, I)$ be the set
of queries that have $b$ as its answer. Each such query probes at most
$t$ cells of the data structure on input $I$. Assume these sets of
cells are \emph{disjoint}, i.e. information about the leaf $b$ is
stored in $|Q(b,I)| = \lg n$ disjoint $t$-sized locations in the memory. 

Now pick a uniform random set $C$ of $\lg(n!)/(4w)$ memory
cells. For a query $q$, we say that $q$ survives if all its $t$
probes lie in $C$. Then by the disjointness of the probed cells, there
will be a surviving query in $Q(b,I)$ with probability roughly
$1-(1-(|C|/S)^t)^{\lg n}$. If $t = o(\lg \lg n/\lg(S/|C|))$, this
is about $1-\exp(\lg n \cdot (|C|/S)^t) = 1-\exp(\lg^{1-o(1)} n)$,
i.e. each leaf index is almost certainly the answer to a surviving
query. Thus $C$ must basically store the entire input. But $|C|$ is
too small for this and we get a contradiction, i.e. $t = \Omega(\lg
\lg n/\lg(Sw/(n \lg n)))$, which roughly equals the lower bound we claim. There are
obviously a few more details to it, but this is the main idea.

Of course any realistic attempt at designing a data structure for the
ball-inheritance problem would try to make the queries in $Q(b,I)$
probe the same cells (which is exactly what Chan et al.'s solution does~\cite{patrascu11range}). In our actual proof, we get around this using
the following observation: Consider two queries $q_1,q_2$ to the
ball-inheritance problem, where $q_2$ is asked in a node $d$ levels below
the node of $q_1$. The probability $q_1$ and $q_2$ return the same
leaf index is exponentially decreasing in $d$. In particular this
means that for the very first probe, the queries in $Q(b,I)$ will
almost certainly read different cells, which is precisely the property
we exploited above. If we pick a random sample of cells, there
will be many queries in $Q(b,I)$ that have their first probe
in the sample. To handle the remaining $t-1$ probes, we follow~\cite{patrascu06pred} and extend the cell probe model with the
concepts \emph{published bits} and \emph{accepted queries}. A data
structure is allowed to publish bits at preprocessing time that the
query algorithm may read free of charge. After inspecting a given
query and the published bits, a data structure can choose to reject the
query and not return an answer. Otherwise, the query is accepted
and the algorithm must output the correct answer. Note that it is only
allowed to reject queries before performing any probes.

The crucial idea is now the following: If the data structure has few
published bits, then for most leaves $b \in [n]$, the published bits
simply contains too little information to make the queries in $Q(b,I)$ probe the same
cells. Thus for $t$ rounds, we can pick a random sample of cells and
\emph{publish} their contents. For every accepted query, we check if
its first probe is amongst the published cells. If so, we continue to accept it and
may skip the first probe since we know the contents of the requested
cell. Otherwise we simply reject it. If the published cell sets are small enough,
there continues to be too little information in the published bits to
make the queries in $Q(b,I)$ meet. Since this holds for all $t$
probes, the argument above for the poorly performing data structures
carry through and we get our lower bound.

\subsection{Deriving the Lower Bound}
With the ideas from Section~\ref{sec:ideas} in mind, we present our
technical lemma that allows us to publish bits for $t$ rounds to
eliminate probes while ensuring that most leaves are still the answer
to many accepted queries. Before we
present the lemma, consider partitioning the ball-inheritance tree into into $\log n / Y$
disjoint layers of $Y$ consecutive tree levels and group the
accepted queries by these layers. Think of $Y$ as looking at the
queries at a given zoom level. To measure how much information we have
left about the different leaves, we count for each leaf $b \in [n]$ how many
layers that have at least one accepted query with $b$ as its answer. If this
count is large, then intuitively the answers to all accepted queries
carry much information.

Formally, given a data structure for the ball-inheritance problem,
define for every $1 \leq Y \leq \lg n$ and index $i \in [\lg n/Y]$ the
\emph{query-support set} of a leaf $b \in [n]$ on an input $I$ as the
set $Q^Y_i(b, I)$ of accepted queries in the tree levels $\{iY, \dots,
(i+1)Y-1\}$ that has $b$ as its answer.
Observe that $|Q^Y_i(b, I)| \in \{0,\dots, Y\}$ since there is
precisely one query in each tree level that has $b$ as its
answer (it may be less than $Y$ since some queries might be rejected). Define also the $Y$-\emph{level-support} of an input $I$,
denoted $L^Y(I)$, as the the number of pairs $(b,i)$ such that
$Q^Y_i(b,I)$ is non-empty.

With this notation in hand we are ready to state our main Probe Elimination Lemma.
\begin{lemma}
\label{lem:elim}
Let $\Ins$ be a set of inputs to the ball-inheritance problem where
$|\Ins| \geq n!/2^n$. Assume a ball-inheritance data structure uses
$S$ cells of $w$ bits, answers queries in $t$ probes, has $p < n\lg n/\lg^9 \lg n$ published bits and
satisfies $L^Y(I) \geq (1-1/Z)n \lg n/Y$ for all $I \in \Ins$ for some
parameters $Z \geq 2$ and $64 \lg w \leq Y \leq \lg n/\alpha$, where
$\alpha=\alphaval$. Then there exists a subset of inputs $\Ins^* \subseteq
\Ins$, with $|\Ins^*| \geq |\Ins|/2$, and another ball-inheritance data structure using $S$
cells of $w$ bits, answering queries in $t-1$ probes with $p+O(n \lg
n/ \lg^{10} \lg n)$ published bits, and satisfying $L^{\alpha
  Y}(I) \geq (1-1/Z-1/\lg \lg^3 n)n \lg n/(\alpha Y)$ for all $I \in
\Ins^*$.
\end{lemma}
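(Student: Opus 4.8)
The plan is to prove the Probe Elimination Lemma by the sampling-and-publishing strategy sketched in Section~\ref{sec:ideas}. Fix a data structure with the stated parameters. The core object is, for each input $I\in\Ins$, the contents of its $S$ memory cells. I will pick a uniformly random subset $\C$ of the memory cells of a carefully chosen size $m$, publish (a) the identities of the cells in $\C$ and (b) their contents \emph{for a suitable canonical encoding}. Actually, publishing identities plus contents of $m$ cells costs $m(w+\lg S)=O(mw)$ bits, so I must take $m=O(n\lg n/(w\lg^{10}\lg n))$ to meet the $p+O(n\lg n/\lg^{10}\lg n)$ published-bit budget; with $w=\Theta(\lg n)$ this is $m=\Theta(n/\lg^{10}\lg n)$ cells roughly. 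The new query algorithm, on an accepted query $q$ in layer $i$ at the new zoom level $\alpha Y$, first checks whether $q$'s (old) first probe lands in $\C$: if yes, it reads that cell's contents from the published bits for free and simulates the remaining $t-1$ probes; if no, it rejects $q$. This immediately gives a $(t-1)$-probe data structure with the right number of published bits. Everything then reduces to showing that, for at least half the inputs, enough query-support survives, i.e. $L^{\alpha Y}(I)\ge(1-1/Z-1/\lg^3\lg n)\,n\lg n/(\alpha Y)$.

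**The heart of the argument** is a counting/probabilistic estimate on how many pairs $(b,i)$ (leaf, new-layer-index) lose all their accepted queries under the random sampling. For a fixed input $I$ and a fixed new layer $i$ of $\alpha Y$ consecutive levels, this new layer is a union of $\alpha$ old layers of $Y$ levels each. By hypothesis $L^Y(I)$ is large, so for most leaves $b$, many of these $\alpha$ sub-layers contain an accepted query answering $b$. I want to argue: if a leaf $b$ has accepted queries in the old sub-layers at levels $\ell_1<\ell_2<\dots$, these queries are asked in nodes that are ancestors of leaf $b$ (they all answer $b$!), hence at a level-gap $d$ the two queries $q_1,q_2$ sit in nodes with $q_2$'s node $d$ levels below $q_1$'s node — exactly the setup of the observation that their first probes coincide with probability exponentially small in $d$. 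Since $\alpha=(Sw\lg^{18}\lg n)/(n\lg n)$ is chosen so that $m/S\approx$ (roughly) $1/(\alpha\cdot\text{poly}(\lg\lg n))$, the expected number of old sub-layers whose accepted query for $b$ has its first probe in $\C$ is at least a constant; more carefully, among $\alpha$ sub-layers spread across $\alpha Y$ levels, the pairwise level gaps grow, so the "first probes are distinct across sub-layers" property holds for most of them, and then each independently lands in $\C$ with probability $m/S$. A Markov/union-bound argument over the $n\lg n/(\alpha Y)$ pairs $(b,i)$ bounds the expected number of pairs that lose \emph{all} their accepted queries by $(1/\lg^3\lg n)\cdot n\lg n/(\alpha Y)$, so by averaging there is a set $\Ins^*$ of at least half the inputs for which the actual loss is at most twice that — and I'd absorb the factor $2$ into the constants or re-tune $\alpha$'s polylog factor.

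**The step I expect to be the main obstacle** is making the "distinct first probes across sub-layers" claim rigorous and quantitative. The observation as stated says two queries $q_1,q_2$ at level-gap $d$ return the same leaf with probability exponentially small in $d$ — but here I am not choosing $q_1,q_2$ randomly; they are the \emph{specific} accepted queries in $Q^Y_{i_1}(b,I)$ and $Q^Y_{i_2}(b,I)$, both of which answer $b$ by construction, so the hypothesis of the observation (random queries, rarely-equal answers) does not apply directly. The correct move is subtler: I do not compare the two accepted queries for $b$ directly, but rather argue that the \emph{cell that the accepted query at level $\ell$ probes first} is, as a function over the input distribution and over which query is accepted, sufficiently "spread out" that the events "first probe of sub-layer-$j$'s query $\in\C$" are close to independent with the right marginals. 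This is where the published bits $p<n\lg n/\lg^9\lg n$ enter: too few published bits means the adversary/data structure cannot have pre-arranged, for most leaves $b$, that all $\alpha$ of these first probes coincide — formalized by an encoding argument, since encoding a leaf's answer-structure using colliding probes plus $p$ published bits would beat the entropy bound $\lg(n!/2^n)$. So the plan's backbone is: (i) set $m$ and re-confirm the published-bit accounting; (ii) define the new data structure and the rejection rule; (iii) for each pair $(b,i)$, lower-bound by an encoding/entropy argument the number of old sub-layers with "independent-looking" first probes; (iv) run the sampling, bound expected lost pairs via linearity of expectation using $m/S$ per sub-layer; (v) invoke Markov over $\Ins$ to extract $\Ins^*$. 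I'd carry out (iii)—the encoding argument that few published bits cannot force probe-collisions—with the most care, as that is the genuinely novel, non-nondeterministic ingredient the authors flag.
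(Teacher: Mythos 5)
Your overall architecture is the same as the paper's: publish the identities and contents of a small set of roughly $n\lg n/(w\lg^{10}\lg n)$ cells, reject any accepted query whose first probe falls outside the published set, skip that probe otherwise, and reduce the whole lemma to showing that for at least half the inputs most pairs $(b,i)$ at zoom level $\alpha Y$ still retain an accepted query. The genuine gap is that the step you yourself flag as the main obstacle---your item (iii), showing that for most pairs the first probes of the queries answering $b$ are spread over many distinct cells---is exactly where essentially all of the paper's work lies, and you leave it as a one-sentence sketch. The paper formalizes it as a scattering dichotomy (Definition~\ref{def:scatter} and Lemma~\ref{lem:scatter}): for each pair $(b,i)$ either some query answering $b$ makes $0$ probes, or some first probe is in a $w^3$-popular cell, or the first probes hit at least $\alpha/\lg^6\lg n$ distinct cells; and for at least half the inputs almost all pairs are scattered. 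Proving this requires its own encoding argument (Lemma~\ref{lem:nicepaths} with Lemmas~\ref{lem:compress} and~\ref{lem:ready}): from any low-scatter input one extracts many \emph{non-overlapping} shortcut edges---pairs of queries at least $64\lg w$ levels apart, answering the same leaf and making their first probe in the same \emph{non-popular} cell---each encodable in $6\lg w$ bits while replacing $\ell_i\ge 64\lg w$ bits of the level-by-level description of the input, which beats the entropy bound $\lg(n!/2^n)$. None of this construction or bookkeeping appears in your proposal, so the lemma is not established.

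Two points in your sketch would actually fail if pushed as written. First, you never treat popular cells. If the data structure routes, for many leaves $b$, all first probes of the queries answering $b$ to one cell that is the first probe of very many queries, then collision-based compression saves nothing (naming a colliding pair inside such a cell is too expensive), and a random sample of $\approx n\lg n/(w\lg^{10}\lg n)$ cells misses that single cell with probability $1-o(1)$, so the pair $(b,i)$ dies. The paper escapes this by publishing \emph{all} $w^3$-popular cells outright (at most $n\lg n/w^3$ of them, i.e.\ $O(n/\lg n)$ bits), so such pairs survive for free, and by restricting shortcut edges to non-popular cells so that each costs only $6\lg w$ bits. Second, your plan to argue that the events ``first probe of sub-layer $j$ lands in the sample'' are close to independent with marginal $m/S$ is not the right formalization and is not what an entropy argument yields; what one gets is the deterministic per-input statement that most pairs have at least $\alpha/\lg^6\lg n$ \emph{distinct} first-probe cells, after which a hypergeometric estimate (the paper's $\binom{S-|C|}{\mu}/\binom{S}{\mu}$ computation) shows that a per-input, adversarially chosen best set of $\mu$ cells hits a $1-1/\lg n$ fraction of these pairs---no independence across sub-layers is needed or available. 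Relatedly, the paper obtains $\Ins^*$ as the high-scatter inputs from Lemma~\ref{lem:scatter} and then guarantees the survival bound for \emph{every} $I\in\Ins^*$ by choosing the published set per input, rather than by a Markov argument over sampling randomness as in your step (v); your variant could likely be repaired, but only after the missing encoding core is supplied.
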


In laymans terms, the lemma states that we can decrease the number of
probes of a data structure by one, while
only increasing the published bits with a lower order term. When we do
this, we maintain the essential property that the leaves
still have high support, just on a coarser zoom level.
The $Z$ factor is basically just a \emph{dirt} factor. We defer the
proof of
Lemma~\ref{lem:elim} to Section~\ref{sec:elim}. In the
following we instead use Lemma~\ref{lem:elim} to prove our main result, Theorem~\ref{thm:rammain}.

Assume for contradiction that a ball-inheritance data
structure exists satisfying $t = o(\lg \lg_w n /\lg \alpha)$, where
$\alpha = \alphaval$. We proceed by repeatedly applying
Lemma~\ref{lem:elim} to eliminate all $t$ probes of the data
structure. In order to guarantee we can apply Lemma~\ref{lem:elim} $t$
times, we check the conditions for applying it. The conditions involve
the number of published bits $p$, the parameters $Z$ and $Y$ and
$|\Ins|$. The values of these parameters will change for each
application, thus we use $p^{(i)}, Z^{(i)},Y^{(i)}$ and $|\Ins^{(i)}|$
to denote these parameters just before the $i$'th invocation of the
lemma. For the first round, we have $p^{(1)} = 0$ and
$|\Ins^{(1)}|=n!$. Note also that $L^Y(I) = n \lg n/Y$ for any $Y$
before the first round. Thus we choose $Y^{(1)} = 64 \lg w$ to satisfy
the conditions $64 \lg w \leq Y^{(1)} \leq \lg n/\alpha$. This also
means that we are free to choose $Z^{(1)} \geq 2$ as we wish. We
simply let $Z^{(1)} = \lg^3 \lg n$. 
%
Examining the lemma, we conclude that our parameters evolve in the
following way (assuming we do not violate any of the conditions):
$$
p^{(1+i)}=O(i(n \lg n / \lg^{10} \lg n)),\quad
|\Ins^{(1+i)}| \geq n!/2^i,\quad
Y^{(1+i)} = 64 \lg w \cdot \alpha^i,\quad
Z^{(i)} \geq  \lg^{3}\lg n/ i.
$$
Since we assumed $t=o(\lg \lg_w n/\lg \alpha)$, this means that
$$
p^{(1+t)}=o(n \lg n / \lg^{9} \lg n),\quad
  |\Ins^{(1+t)}| \geq n!/\lg n, \quad
  Y^{(1+t)} =
  o(\lg n), \quad
Z^{(1+t)} \geq  \lg^{2}\lg n.
$$
We conclude that we can apply our lemma for $t$ rounds under the
contradictory assumption. Furthermore, the data structure we are left
with answers queries in $0$ probes on a subset $\Ins^* = \Ins^{(1+t)}$
of inputs, where $|\Ins^*| \geq n!/\lg n$. It has $o(n \lg n/\lg^9 \lg
n)$ published bits and there is some $Y^* = o(\lg n)$ such that
$L^{Y^*}(I) \geq (1-1/\lg^2\lg n)n \lg n/Y^*$ for all $I \in
\Ins^*$. That this is contradictory should not come as a surprise: our
$0$-probe data structure is capable of answering queries about almost
all leaves using only the $o(n \lg n/\lg^{9} \lg n) \ll \lg |\Ins^*|$
published bits. The formal argument we use to reach the contradiction
is as follows: we show that the $0$-probe data structure can be used
to uniquely encode every input $I \in \Ins^*$ into a bit string of
length less than $\lg(|\Ins^*|) = \lg(n!)-\lg \lg n$ bits. This gives
the contradiction since there are fewer such bit strings than
inputs. We present the encoding and decoding algorithms in the following:

\paragraph{Encoding.}
Let $I \in \Ins^*$ be an input to encode. Observe that if we manage to encode
the leaf index reached by each ball in the root node's list of balls,
then that information completely specifies $I$. With this in mind, we implement the $0$-probe
data structure above on $I$ and proceed as follows:
\begin{enumerate}
\item First we write down the published bits on input $I$. This cost
  $o(n \lg n/\lg^9 \lg n)$ bits.
\item For $i=1,\dots,n$ consider the $i$'th ball in the root node's
  list of balls. Let $b_i$ denote the index of the leaf reached by
  that ball. We write down $\lg n/2$ bits for each such ball in
  turn, specifying the subtree at depth $\lg n/2$ that contains the
  leaf $b_i$. This costs $n \lg n/2$ bits.
\item Finally, we go through all leaf nodes from left to right. For a
  leaf $b$, we check if there is an accepted query returning $b$ as
  its answer amongst all queries in all nodes of depth at most $\lg
  n/2$. If so, we continue to the next leaf. Otherwise we write $\lg
  n$ bits denoting the rank of the ball reaching $b$ amongst balls the root node's list of
  balls. If $X$ is the number of leaves with no accepted query
  reporting it in tree levels $\{0,\dots,\lg n/2\}$, this step costs
  $X \lg n$ bits.
\end{enumerate}

\paragraph{Decoding.}
To recover $I$ from the above encoding, we do as follows.
\begin{enumerate}
\item We first go through all nodes $v$ of depth $d$ for
  $d=0,\dots,\lg n/2$. For each such node, let $q^v_1,\dots,q^v_{n/2^d}$
  denote the queries we can ask at node $v$, i.e. $q^v_i$ asks for the
  leaf reached by the $i$'th ball in $v$'s list of balls. We run the
  query algorithm for each
  such query in turn using the published bits written in step 1. of
  the encoding procedure. Since our data structure makes $0$ probes, this returns the answer to each such
  accepted query, i.e. we have collected a set $\Pairs$ of pairs $(q^v_i, b)$
  such that $b$ is the index of the leaf reached by the $i$'th ball in
  $v$'s list of balls. 
\item We now partition $\Pairs$ into one set $\Pairs_b$ for each leaf
  index $b$. The set $\Pairs_b$ contains all pairs $(q^v_i, b') \in
  \Pairs$ such that $b'=b$. There are precisely $X$ empty such sets.
\item For each empty set $\Pairs_b$ in turn (ordered based on $b$), we
  use the bits written in step 3. of the encoding procedure to recover
  the rank of the ball reaching $b$ amongst all balls in the root node's list of balls.
\item For every non-empty set $\Pairs_b$, pick an arbitrary pair
  $(q^v_i, b) \in \Pairs_b$. From this pair alone, we know that the
  ball reaching $b$ has rank $i$ amongst all balls ending in a leaf of
  the subtree rooted at $v$. Now initialize a counter $\Delta$ to
  $0$. Using the bits written in step 2. of the encoding procedure, we
  now go through all balls in the root node's list of balls in
  turn. For the $r$'th ball, $r=1,\dots,n$, we check the $\lg n/2$
  bits written for it and from this we determine if the ball reaches a
  leaf in $v$'s subtree (possible since $v$ can only be in the first
  $\log n/2$ levels by construction). If so, we increment $\Delta$ by
  $1$. If this causes $\Delta$ to reach $i$, we conclude that the ball
  ending in $b$ has rank $r$ in the root node's list of balls.
\item From the above steps, we have for every leaf $b$ determined the
  rank of the ball reaching it amongst all balls in the root node's
  list of balls. This information completely specifies $I$.
\end{enumerate}

\paragraph{Analysis.}
The encoding above costs
$$
o(n \lg n/\lg^9 \lg n) + n \lg n/2 + X \lg n
$$
bits. Now observe that if $\Pairs_b$ is empty for a leaf index $b$,
this means $Q_i^{Y^*}(b,I)$ is empty for every $i \in \{0,\dots, \lg
n/(2Y^*)-1\}$. This gives $L^{Y^*}(I) \leq n \lg n/Y^* - X(\lg
n/(2Y^*))$. But we know $L^{Y^*}(I) \geq (1-1/\lg^2 \lg n)n\lg n/Y^*$
and we conclude
$$
X \leq 2n/\lg^2\lg n.
$$
The encoding thus costs
$$n\lg n/2 + O(n \lg n/\lg^2 \lg n).$$
Since $\lg(n!) = n \lg n - O(n)$, we conclude that our encoding uses
no more than
$$
\lg(|\Ins^*|)-n\lg n/2 + O(n \lg n/\lg^2 \lg n) =
\lg(|\Ins^*|)-\Omega(n \lg n)
$$
bits, which completes the proof. 

We have thus shown $t = \Omega(\lg
\lg_w n / \lg \alpha)$ where $\alpha = \alphaval$. In the word-RAM, we
assume $w = \Theta(\lg n)$ and the lower bound becomes the claimed $t
= \Omega(\lg \lg n/(\lg(S/n) + \lg \lg \lg n))$.

\subsection{Eliminating Probes}
\label{sec:elim}
In this section we prove Lemma~\ref{lem:elim}. Recalling the intuition
presented in Section~\ref{sec:ideas}, we want to show that for a data
structure with few published bits, the different accepted queries reporting a
fixed leaf index $b \in [n]$ have to probe distinct cells in their
first probe. If we can
establish this, we can pick a small random sample of memory cells and
there are many of the accepted queries that make their first probe in the sample.

To formalize the above, we define a memory cell $c$ to be $k$-popular
on input $I$, if at least $k$ accepted queries make their first probe
in $c$ on $I$. Define for every query-support set $Q_i^Y(b, I)$ the
\emph{cell-support set} $C_i^Y(b, I)$ as the set of memory cells that
are read in the first probe of a query in $Q_i^Y(b, I)$ on input
$I$. We measure to what extend the queries in $Q_i^Y(b,I)$ probe
distinct cells using the following definitions.
\begin{definition}
\label{def:scatter}
For an input $I$ and value $Y \in \{1,\dots,\lg n\}$, we say that a
pair $(b, i)$, where $b \in [n]$ and $i \in \{0, \dots, \lg n/Y -1\}$,
is $Y$-\emph{scattered} on input $I$ if one of the following three
holds:
\begin{enumerate}
\item $Q_i^{Y}(b, I)$ contains a query making $0$ probes.
\item $C_i^{Y}(b, I)$ contains a $w^3$-popular cell.
\item $|C_i^{Y}(b, I)| \geq \alpha/\lg^6\lg n$.
\end{enumerate}
We define the $Y$-\emph{scatter-number} of $I$, denoted $\Gamma^Y(I)$,
as the number of pairs $(b,i)$ that are $Y$-scattered on $I$.
\end{definition}
If a query makes zero probes, all the information needed to answer
it is contained in the already published bits.  There are very few
$w^3$-popular cells, so publishing all of them costs few bits.  Most
interestingly, if the queries in each support $Q_i^Y(b,I)$ set probe
many distinct cells in their first probe (case 3.), then a random sample of
cells will contain at least one of these cells with good probability.

We need the following lemma that captures the correspondence between
large support on zoom level $Y$, the properties maintained by our Probe
Elimination Lemma, and large scattering on a higher zoom level $\alpha
Y$.
\begin{lemma}
\label{lem:scatter}
Let $\Ins$ be a set of inputs to the ball-inheritance problem where $|\Ins| \geq n!/2^n$. Assume
a ball-inheritance data structure uses $S$ cells of $w$ bits, has $p <
n \lg n/\lg^9 \lg n$ published bits and satisfies $L^Y(I) \geq (1-1/Z)n \lg
n/Y$ for all $I \in \Ins$ for some parameters $Z\geq 2$ and $64 \lg w \leq Y \leq
\lg n/\alpha$, where $\alpha=\alphaval$. Then there exists a subset $\Ins^*
\subseteq \Ins$ of inputs such that $|\Ins^*| \geq |\Ins|/2$ and
$$\Gamma^{\alpha Y}(I) \geq \left(1-\frac{1}{\lg^3\lg n}\right)\cdot \left(1-\frac{1}{Z}\right) \cdot \frac{n\lg n}{\alpha Y}.$$
for all $I \in \Ins^*$.
\end{lemma}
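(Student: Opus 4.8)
The plan is to reduce the claim to an upper bound on the number of non‑scattered pairs that are \emph{not} vacuously non‑scattered, dispose of the ``cheap'' such pairs with a counting argument against $L^Y$, and control the rest with an incompressibility argument exploiting the scarcity of published bits together with the collision observation from Section~\ref{sec:ideas}. For the reduction, note that each $\alpha Y$‑layer is a union of $\alpha$ consecutive $Y$‑sublayers, so a pair $(b,i)$ at zoom $\alpha Y$ has all sublayers empty only if $\alpha$ distinct entries are missing from the zoom‑$Y$ support; hence $L^Y(I)\ge(1-1/Z)n\lg n/Y$ forces $L^{\alpha Y}(I)\ge(1-1/Z)n\lg n/(\alpha Y)$. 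Every all‑empty pair is non‑scattered (all three conditions of Definition~\ref{def:scatter} fail vacuously) and every scattered pair is non‑all‑empty, so $\Gamma^{\alpha Y}(I)=L^{\alpha Y}(I)-\mathrm{Bad}(I)$, where $\mathrm{Bad}(I)$ counts the non‑all‑empty, non‑scattered pairs. Since $1-1/Z\ge 1/2$, it therefore suffices to produce $\Ins^*\subseteq\Ins$ with $|\Ins^*|\ge|\Ins|/2$ on which $\mathrm{Bad}(I)\le\frac{1}{2\lg^3\lg n}\cdot\frac{n\lg n}{\alpha Y}$.

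Fix the published bits (there are $\le 2^p$ values and $p=o(\lg n!)$); once fixed, the acceptance of a query, whether it probes, its first‑probe cell, and whether that cell is $w^3$‑popular are all input‑independent. Call a bad pair \emph{sublayer‑poor} if fewer than $\frac{\alpha}{4\lg^3\lg n}$ of its $\alpha$ sublayers are non‑empty. Each sublayer‑poor pair accounts for more than $\alpha\big(1-\frac{1}{4\lg^3\lg n}\big)$ missing zoom‑$Y$ entries, so there are at most $\big(1+\frac{1}{2\lg^3\lg n}\big)\cdot\frac1Z\cdot\frac{n\lg n}{\alpha Y}$ of them, which (using $\frac1Z\le\frac12$) still leaves slack $\ge\frac14\cdot\frac{1}{\lg^3\lg n}\cdot\frac{n\lg n}{\alpha Y}$ in the budget for the remaining bad pairs. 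Now take a bad pair $(b,i)$ that is not sublayer‑poor: it has $\ge\frac{\alpha}{4\lg^3\lg n}$ non‑empty sublayers, none of whose accepted queries makes $0$ probes (else case~1 holds) and none with a $w^3$‑popular first‑probe cell (else case~2 holds); choosing one accepted query per non‑empty sublayer gives $\ge\frac{\alpha}{4\lg^3\lg n}$ queries in distinct sublayers, all returning $b$, whose first probes — by the failure of case~3 — hit fewer than $\frac{\alpha}{\lg^6\lg n}$ cells. By pigeonhole some non‑$w^3$‑popular cell $c$ is the first probe of $\ge\frac{\lg^3\lg n}{4}$ of them, and keeping every other sublayer yields $\ge\frac{\lg^3\lg n}{8}$ queries first‑probing $c$, all returning $b$, pairwise more than $Y$ tree levels apart. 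Since the set $A_c$ of accepted queries first‑probing $c$ has $|A_c|<w^3$, each not‑sublayer‑poor bad pair thereby \emph{certifies} $\Omega(\lg^6\lg n)$ ``far, agreeing'' query‑pairs inside $A_c$ — far meaning more than $Y$ levels apart with one a descendant of the other, agreeing meaning they return a common leaf — and certified pairs coming from distinct bad pairs are disjoint.

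It remains to bound the number of not‑sublayer‑poor bad pairs. By the collision observation (Section~\ref{sec:ideas}), for a uniformly random input two queries $d$ levels apart (one a descendant of the other) return the same leaf with probability $2^{-\Omega(d)}$; since $Y\ge 64\lg w$ makes $2^{-\Omega(Y)}\le w^{-7}$ and $|A_c|<w^3$, the expected number of far‑agreeing pairs inside any fixed $A_c$ is $o(1/w)$, so over the $\le n\lg n$ possible first‑probe cells a random input has, in expectation, only $o\!\big(\frac{n\lg n}{w\,\lg^6\lg n}\big)=o\!\big(\frac{1}{\lg^3\lg n}\cdot\frac{n\lg n}{\alpha Y}\big)$ not‑sublayer‑poor bad pairs (using $\alpha Y\le\lg n$ and $w=\Theta(\lg n)$). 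Because $\Ins$ is far too sparse for a direct Markov step to survive, we instead turn this into an encoding argument: if more than half the inputs of $\Ins$ had more than $\frac14\cdot\frac{1}{\lg^3\lg n}\cdot\frac{n\lg n}{\alpha Y}$ not‑sublayer‑poor bad pairs, then each such $I$ would admit a description of length less than $\lg|\Ins|$, consisting of (i)~its published bits ($p$ bits), (ii)~the data identifying one far‑agreeing query‑pair witnessing each certified bad pair — each named by a first‑probe cell plus two indices into its $(<w^3)$‑element set $A_c$, amortized over shared cells — (iii)~coarse per‑ball subtree information, and (iv)~for every leaf not pinned down by a certified pair, its usual cost; the certified pairs, reconstructed from (i)--(iii) just as the $0$‑probe decoder in the proof of Theorem~\ref{thm:rammain} reconstructs its (query, answer) pairs, localise the ball reaching the associated leaf and recover it for free. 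The resulting length is $\lg(n!)-\Omega(n\lg n)<\lg|\Ins|$, a contradiction, so the violating inputs number fewer than $|\Ins|/2$ and their complement is the desired $\Ins^*$.

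The main obstacle is the encoding of step~(iv): one must verify that the certified far‑agreeing pairs genuinely recoup a full $\Theta(\lg n)$ bits per pinned‑down leaf, while the overheads of (ii) and (iii) — in particular the cost of naming cells when a data structure spreads its witness cells thinly — and the error terms hidden in the various $\lg^c\lg n$ factors and in the conditioning on the published bits all stay strictly inside the $\tfrac14\cdot\tfrac{1}{\lg^3\lg n}$ slack left by the reduction. This is exactly the point where one exploits that the data structure is deterministic: a nondeterministic structure could route every query to a single cell and thereby sabotage the scattering, so the argument must encode rather than merely take expectations, and the accounting is what makes it work.
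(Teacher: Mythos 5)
Your overall strategy—identify non-scattered pairs that still have many non-empty $Y$-sublayers, extract far-apart queries returning the same leaf that share a non-$w^3$-popular first-probe cell, and then derive a contradiction by an encoding argument against $\lg|\Ins|$—is the same as the paper's (Lemmas~\ref{lem:compress} and~\ref{lem:nicepaths} plus the encoding of Section~\ref{sec:highscatter}), and your reduction to bounding the "bad" pairs, including charging the sublayer-poor ones against the $1/Z$ budget of empty $Y$-pairs, is fine after minor bookkeeping. The genuine gap is in the encoding step, and it is both conceptual and quantitative. Conceptually, a certified far-agreeing pair $(q_1,q_2)$ does \emph{not} "pin down the leaf" reached by the ball, and it cannot be decoded "just as the $0$-probe decoder" of Theorem~\ref{thm:rammain}: there the query algorithm actually outputs answers from the published bits alone, whereas here $q_1$ and $q_2$ make probes with unknown contents, and all the pair certifies is that the ball at index $i_1$ in $v_1$'s list is the ball at index $i_2$ in the descendant $v_2$'s list. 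That information only lets a decoder skip the $\ell_2-\ell_1$ left/right (regular-edge) bits between the two levels; it recoups the pair's \emph{span} in levels, not $\Theta(\lg n)$ bits per leaf, since $v_2$ may sit far above the leaves. Quantitatively this is fatal to your accounting: you encode one witness pair per rich bad pair, all witnesses taken from a single cell hit by only about $\lg^3\lg n$ representative queries, so the guaranteed saving is $O(\alpha Y)$ and only $\Omega(Y\lg^3\lg n)$ per bad pair, while the contradictory assumption only guarantees $\Theta\bigl(\tfrac{1}{\lg^3\lg n}\cdot\tfrac{n\lg n}{\alpha Y}\bigr)$ rich bad pairs. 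The total guaranteed saving is thus $O(n\lg n/\alpha)$ with $\alpha=\alphaval$; already for linear space and $w=\Theta(\lg n)$ this is $O(n\lg n/\lg^{18}\lg n)$, which is an order of magnitude \emph{smaller} than the $p<n\lg n/\lg^9\lg n$ published bits you must write down first. No contradiction follows, and your claimed final length $\lg(n!)-\Omega(n\lg n)$ is unsubstantiated—even the paper only reaches $\lg(n!)-\Omega(n\lg n/\lg^8\lg n)$, and beating $p$, the $S$-bit cell indicator vector and the $n$-bit slack from $|\Ins|\ge n!/2^n$ is exactly the delicate part, which your own closing paragraph flags as unverified.

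The paper closes this gap by harvesting much more than one witness per pair: for each compressable pair it takes one representative query from \emph{every} non-empty sublayer ($\ge\alpha/\lg^4\lg n$ of them) and greedily chains them across all of the at most $\alpha/\lg^6\lg n$ cells of $\RepCells^{\alpha Y}_i(b,I)$, obtaining non-overlapping shortcut edges of \emph{total} length $\Omega(\alpha Y/\lg^4\lg n)$ per pair, each of length at least $Y\ge 64\lg w$ and hence costing only a $O(\lg w)$-bit name inside its (non-popular, $|V_c|<w^3$) cell; summed over the $\ge n\lg n/(\alpha Y\lg^4\lg n)$ pairs this yields a saving of $\Omega(n\lg n/\lg^8\lg n)$, which dominates $p$, $S=O(n\lg n/\lg^{18}\lg n)$ and the $O(n)$ slack. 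The decoding is then a top-down reconstruction of the regular edges in which shortcut edges substitute for the skipped bits—not a leaf-pinning argument. Also note that your "collision observation" paragraph (expected agreement $2^{-\Omega(d)}$ for a uniform input) plays no role in a valid proof and, as you yourself note, cannot be applied to the fixed family $\Ins$; the paper never uses it quantitatively, only as intuition.
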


We defer the proof of Lemma~\ref{lem:scatter} to
Section~\ref{sec:highscatter}, and use it to prove Lemma~\ref{lem:elim} instead.
Let $\Ins$ be a set of at least $n!/2^n$ inputs to the ball
inheritance problem. Assume furthermore we are given a ball
inheritance data structure that uses $S$ cells of $w$ bits, answers
queries in $t$ probes, has $p < n \lg n/\lg^9 \lg n$ published bits,
and satisfies $L^Y(I) \geq (1-1/Z)n \lg n/Y$ for all $I \in
\Ins$ for some parameters $Z \geq 2$ and $64 \lg w \leq Y \leq \lg
n/\alpha$ where $\alpha=\alphaval$ (as in the assumptions of
Lemma~\ref{lem:elim} and Lemma~\ref{lem:scatter}). Let $\Ins^*
\subseteq \Ins$ be the subset of $\Ins$ promised by
Lemma~\ref{lem:scatter}. Our goal is to construct a new ball
inheritance data structure answering queries in $t-1$ probes for the
inputs $\Ins^*$ while publishing few bits and keeping $L^{\alpha
  Y}(I)$ fairly large for all $I \in \Ins^*$. Given an input $I \in
\Ins^*$, we keep the (old) $p$ published bits and publish some
additional bits from our data structure as follows:
\begin{enumerate}
\item First we publish all memory cells that are $w^3$-popular on
  input $I$. Since
  there are no more than $n \lg n$ accepted queries, there are no more than $n \lg n /w^3$
  popular cells. The addresses and contents of all such cells can be
  described using $O(n \lg n /w^2) = O(n/ \lg n)$
  bits.
\item Next we collect all $\alpha Y$-scattered pairs $(b,i)$ for input
  $I$. We remove those pairs for which $Q_i^{\alpha Y}(b,I)$ contains a query
  making $0$ probes, or $C_i^{\alpha Y}(b,I)$ contains a $w^3$-popular
  cell. By definition, the remaining $\alpha Y$-scattered pairs $(b,i)$ must satisfy
  $|C_i^{\alpha Y}(b,I)| \geq \alpha/\lg^6 \lg n$. We now consider all subsets
  of $n
  \lg n / (w \lg^{10} \lg n)$ memory cells and
  publish the subset $P^* \subseteq [S]$ for which most remaining pairs $(b,i)$
  satisfies $C_i^{\alpha Y}(b,I) \cap P^* \neq \emptyset$. Specifying the
  addresses and contents of cells in $P^*$ costs $O(n \lg n/\lg^{10}
  \lg n)$ bits.
\end{enumerate}

The query algorithm of our modified data structure is simple: We start
running the old query algorithm with the $p$ ``old'' published
bits and stop once one of the following happens:
\begin{enumerate}
\item If the old query algorithm rejects the query, we also reject it.
\item If the old query algorithm answers the query without any probes,
  we know the answer to the query and return it.
\item Otherwise the old query algorithm makes at least one memory
  probe. The (address of the) first cell probed, denoted $c$, can be determined solely from the old
  published bits. Before making the actual probe, we check the newly
  published cells to see if $c$ is amongst them. If so, we have the
  contents of $c$ in the published bits and therefore skip the
  probe. We then continue executing the old query algorithm and
  have successfully reduced the number of probes by one. If $c$ was
  not published, we simply reject the query.
\end{enumerate}

Clearly our new data structure answers queries in $t-1$ probes and has
$p+O(n \lg n/\lg^{10} \lg n)$ published bits. What remains is to argue
that $L^{\alpha Y}(I)$ is high for all $I \in \Ins^*$ for this new
data structure. To distinguish the new data structure and the old, we
use $\bar{L}, \bar{Q}$ and $\bar{\Gamma}$ in place of $L,Q$ and
$\Gamma$ when referring to
the new data structure. $L, Q$ and $\Gamma$ refers to the old data structure.

So fix an $I \in \Ins^*$. By our choice of $\Ins^*$, we have
$$
\Gamma^{\alpha Y}(I) \geq \left(1-\frac{1}{\lg^3 \lg n}\right) \cdot
\left(1-\frac{1}{Z}\right) \cdot \frac{n \lg n}{\alpha Y}.
$$
i.e. the old data structure has many pairs $(b,i )$ that are
$\alpha Y$-scattered on input $I$. By definition of $\bar{L}^{\alpha
  Y}(I)$, we need to lower
bound the number of such pairs $(b,i)$ that have $\bar{Q}^{\alpha Y}_i(b,I)$
non-empty, i.e. at least one query reporting $b$ in tree-levels
$\{i\alpha Y,\dots,(i+1)\alpha Y-1\}$ is accepted by our new query
algorithm. For this, let $(b,i)$ be a pair that was $\alpha
Y$-scattered for $I$ in the old data structure. By definition of
$\alpha Y$-scattered we know that $Q^{\alpha Y}_i(b,I)$ is non-empty. Now observe that if $Q^{\alpha Y}_i(b,I)$
contains a query that made $0$ probes, then that query is also in
$\bar{Q}^{\alpha Y}(b,I)$. Similarly if $Q^{\alpha Y}_i(b,I)$ contains
a query making its first probe in a $w^3$-popular cell, then that query is also in
$\bar{Q}^{\alpha Y}_i(b,I)$ since we publish all $w^3$-popular
cells. Hence 
$\bar{Q}_i^{\alpha Y}(b,I)$ can be empty only if $Q_i^{\alpha
  Y}(b,I)$ contains no queries making $0$ probes and no queries
probing a $w^3$-popular cell. Since
$(b,i)$ was $\alpha Y$-scattered, this implies $|C_i^{\alpha Y}(b,I)|
\geq \alpha/\lg^6\lg n$. Furthermore, we get that $\bar{Q}_i^{\alpha
  Y}(b,I)$ becomes empty only if none of these cells are in $P^*$.

Letting $\mu = n \lg n/(w \lg^{10}\lg n)$, we
get that $C_i^{\alpha Y}(b,I)$ has a non-zero intersection with the
following fraction of $\mu$-sized cell sets:
\begin{eqnarray*}
1-\frac{\binom{S-|C_i^{\alpha Y}(b,I)|}{\mu}}{\binom{S}{\mu}} \geq
1-\frac{(S-\alpha/\lg^6\lg n)!(S-\mu)!\mu!}{S!(S-\alpha/\lg^6 \lg
  n-\mu)!\mu!} \geq 
1-\frac{(S-\mu)^{\alpha/\lg^6 \lg n}}{(S-\alpha/\lg^6 \lg
  n)^{\alpha/\lg^6 \lg n}} &=&\\
1-\left(\frac{S-\alpha/\lg^6 \lg n+\alpha/\lg^6 \lg n-\mu}{S-\alpha/\lg^6 \lg n}\right)^{\alpha/\lg^6 \lg n} =
1-\left(1-\frac{\mu - \alpha/\lg^6 \lg n}{S-\alpha/\lg^6 \lg
    n}\right)^{\alpha / \lg^6 \lg n}.
\end{eqnarray*}
Since $\alpha=(S w \lg^{18} \lg n)/(n \lg n) = S \lg^8 \lg n / \mu \ll \mu/2$,
this is at least a
\begin{eqnarray*}
1-\left(1-\frac{\mu}{2S}\right)^{\alpha/\lg^6 \lg n} \geq
1-\exp\left(-\alpha \mu /(2 S \lg^6 \lg n)\right) \geq
1-1/\lg n
\end{eqnarray*}
fraction. Since we chose $P^*$ to maximize the number sets
$C_i^{\alpha Y}(b,I)$ having a non-empty intersection, we conclude
that at least
$$
\left(1-\frac{1}{\lg n}\right)\cdot\left(1-\frac{1}{\lg^3 \lg n}\right)
  \cdot \left(1-\frac{1}{Z}\right) \cdot \frac{n \lg n}{\alpha Y}  \geq
\left(1-\frac{1}{Z}-\frac{2}{Z \lg^3 \lg n}\right) \frac{n \lg
  n}{\alpha Y}
$$
sets $\bar{Q}_i^{\alpha Y}(b, I)$ must be non-empty. Since $Z \geq 2$,
we finally conclude
$$
\bar{L}^{\alpha Y}(I) \geq \left(1-\frac{1}{Z}-\frac{1}{\lg^3\lg n}\right)
\frac{n \lg n}{\alpha Y}.
$$

\subsection{High Scattering}
\label{sec:highscatter}
In the following, we prove Lemma~\ref{lem:scatter}, i.e. that many
queries have to be scattered. The proof is based on an encoding
argument. Let $\Ins$ be a set of inputs to the ball-inheritance
problem such that $|\Ins|\geq n!/2^n$ and consider a data structure
with $S$ cells of $w$ bits, $p < n \lg n/\lg^9 \lg n$ published bits and
$L^Y(I) \geq (1-1/Z)n \lg n/Y$ for all $I \in \Ins$, for some
parameters $Z \geq 2$ and $64 \lg w \leq Y \leq \lg n/\alpha$ where
$\alpha=\alphaval$. Assume for contradiction that the data structure also
satisfies:
\begin{eqnarray}
\Gamma^{\alpha Y}(I) < \left(1-\frac{1}{\lg^3 \lg n}\right) \cdot
\left(1-\frac{1}{Z}\right) \cdot \frac{n\lg n}{\alpha Y}
\end{eqnarray} for more than
$|\Ins|/2$ of the inputs $I \in \Ins$. We call such inputs
\emph{interesting}. We show that all interesting inputs can be
uniquely encoded (and decoded) into a string of less than $\lg(n!)-n-1
\leq \lg(|\Ins|)-1$ bits. This is clearly a contradiction since there
are more than $|\Ins|/2$ interesting inputs. For the remainder of the
section, we implicitly work with this contradictory data structure,
e.g. whenever we say an interesting input, we mean an interesting
input for the contradictory data structure satisfying all of the
above.

The encoding we present below exploits that an interesting input must
have many leaf indices $b \in [n]$ that are not $\alpha Y$-scattered
and at the same time, the $Y$-level-support of $b$ is high. These two
properties combined implies that such a leaf $b$ is reported by many
queries that read within a small set of non-popular cells. Thus the
data structure has in some sense managed to \emph{route} queries
reporting the same leaf to the same memory cell. This should not be
possible if the queries are sufficiently far apart in the
ball-inheritance tree, at least not without a large number of
published bits, see the intuition in Section~\ref{sec:ideas}. Turning this into a concrete property we can use in an
encoding argument requires a few definitions.

\begin{definition}
Let $I$ be an interesting input and let $(q_1,q_2)$ be a pair of
queries. We say that $(q_1,q_2)$ is a \emph{ball-edge} if $q_1$ and
$q_2$ report the same leaf index on input $I$, and furthermore, $q_1$
is at a higher level in the ball-inheritance tree than $q_2$. The
\emph{length} of a ball-edge $(q_1,q_2)$ is the number of levels
between $q_1$ and $q_2$ in the ball-inheritance tree. Ball-edges of length $1$ are called \emph{regular} edges and ball-edges longer than $1$ are called \emph{shortcut} edges.
\end{definition}

\begin{observation}
\label{obs:paths}
Let $I$ be an interesting input. Then $I$ is uniquely determined from
the set of all the $n \lg n$ regular ball-edges.
\end{observation}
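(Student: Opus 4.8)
The plan is to recover, from the set of regular ball-edges alone, the index of the leaf reached by each of the $n$ balls in the root's list of balls; since $I$ is completely specified by this information (exactly as observed in the encoding argument above), that suffices. Note that interestingness of $I$ plays no role here: ball-edges are defined purely in terms of which queries report which leaf on $I$, which depends on $I$ alone and not on any data structure. So the statement really holds for every input.

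First I would establish the key local fact: every query $(v,i)$ whose node $v$ is not a leaf is the upper (higher-level) endpoint of exactly one regular ball-edge. Indeed, if $b$ is the leaf reached by the $i$'th ball in $v$'s list, then $b$ lies in the subtree of a unique child $v_c$ of $v$, and that ball occupies a unique position $j$ in $v_c$'s list (the child lists inherit the root ordering), so the query $(v_c,j)$ also reports $b$; hence $((v,i),(v_c,j))$ is a regular edge, and it is the only regular edge with upper endpoint $(v,i)$ because $v_c$ and $j$ are forced by $(v,i)$ and $I$. Conversely, a query whose node is a leaf is never the upper endpoint of any ball-edge. As a sanity check, summing one outgoing regular edge per (node, list-position) over all non-leaf nodes recovers exactly the $n\lg n$ figure in the statement, with each regular edge used once.

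With this in hand I would reconstruct $I$ by tracing paths of regular edges. Treating the (known) shape of the tree as fixed and given the edge set, for each $i\in[n]$ start at the root query $(\mathrm{root},i)$ and repeatedly move to the lower endpoint of its unique outgoing regular edge. By the local fact this produces a well-defined sequence of queries descending one level at a time, $(\mathrm{root},i)=(v_0,i_0),(v_1,i_1),\dots,(v_{\lg n},i_{\lg n})$, which terminates precisely when $v_{\lg n}$ is a leaf; and since consecutive queries along the path report the same leaf index, $v_{\lg n}$ is exactly the leaf reached by the $i$'th ball in the root's list. Running this for every $i\in[n]$ determines the whole root-to-leaf assignment, and hence $I$.

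I do not anticipate a genuine obstacle, since this is a structural observation rather than a quantitative estimate. The only point that needs a line of care is the uniqueness claim in the local fact: ball-edges are just pairs of queries with no attached leaf label, so one must verify that the ``next'' query along a traced path is unambiguous — which is precisely what the uniqueness of the child $v_c$ and the position $j$ guarantees.
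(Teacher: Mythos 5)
Your proof is correct and follows essentially the same route as the paper's: chain the regular edges into $n$ root-to-leaf paths, then read off, for each ball, its rank from the root-level query and its destination leaf from the leaf-level query, which together determine $I$. Your extra care about the out-degree-one/uniqueness fact (and the remark that interestingness is irrelevant here) is consistent with, and only makes explicit, what the paper's argument implicitly uses.
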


\begin{proof}
From the set of all regular edges $(q_{1,1},q_{1,2}),\dots,(q_{n \lg
  n, 1}, q_{n \lg n, 2})$ we find all pairs $(q_{i,1},q_{i,2}),
(q_{j,1},q_{j,2})$ such that $q_{i,2}=q_{j,1}$. Collectively, all
these pairs form $n$ paths $P_0,\dots,P_{n-1}$, each of length $\lg
n$. All the queries in a path $P_i$ must necessarily report the same
leaf index, and no query in a path $P_j$, where $j \neq i$, reports the
same leaf index. Each path $P_i$ has the form $(q_0,
q_1),(q_1,q_2),\dots,(q_{\lg n-1},q_{\lg n})$, where $q_i$ is a query
at the $i$'th level of the ball-inheritance tree. Now recall that a
query $q$ is specified by a node in the ball-inheritance tree and an
index (rank) into that node's list of balls. Thus the query $q_{\lg
  n}$ tells us the leaf index $b$ returned by all queries in $P_i$ on
input $I$. The query $q_0$ tells us the rank of the ball reaching $b$
in the root's ball list (the rank amongst all balls). This information
for every $b$ specifies $I$ completely.
\end{proof}

With Observation~\ref{obs:paths} in mind, we set out to give a
succinct encoding of all regular ball-edges. The trick is to encode a
set of shortcut edges cheaply and use the information they provide to
avoid explicitly encoding the regular edges spanning the same subset
of tree levels. For the shortcut edges to collectively save many bits,
we need them to be non-overlapping in the following sense:

\begin{definition}
Let $(q_1,q_2)$ and $(q_3,q_4)$ be two ball-edges and let
$\ell_1,\ell_2,\ell_3$ and $\ell_4$ denote the tree levels where
queries $q_1,q_2,q_3$ and $q_4$ are asked respectively. Then the two
edges are \emph{non-overlapping} if either the queries return different leaf
indices, or if the two sets of level indices
$\{\ell_1,\ell_1+1,\dots,\ell_2\}$ and
$\{\ell_3,\ell_3+1,\dots,\ell_4\}$ spanned by the edges have an intersection of size at most
$1$. Otherwise, they are \emph{overlapping}.
\end{definition}

We are finally ready to state the main lemma allowing us to compress
interesting inputs during our encoding steps:
\begin{lemma}
\label{lem:nicepaths}
Let $I$ be an interesting input. Then there exists a set of shortcut
edges $$P=\{(q_{1,1},q_{1,2}),\dots,(q_{m,1},q_{m,2})\}$$ of lengths
$\ell_1,\dots,\ell_m$ satisfying the following:
\begin{enumerate}
\item All edges in $P$ are non-overlapping.
\item $\ell_i \geq 64 \lg w$ for all $i$. 
\item $\sum_i \ell_i = \Omega(n \lg n / \lg^8 \lg n)$.
\item For all $i$, the queries $q_{i,1}$ and $q_{i,2}$ make their
  first probe in the same cell on input $I$, and that cell is not
  $w^3$-popular. 
\end{enumerate}
\end{lemma}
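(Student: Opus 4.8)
The plan is to build the set $P$ greedily, using the scattering structure guaranteed for interesting inputs (Lemma~\ref{lem:scatter}, more precisely its negation here since $I$ is interesting) together with a counting/charging argument at the level of the layers $\{iZ,\dots\}$... wait, let me restate. Since $I$ is interesting, $\Gamma^{\alpha Y}(I) < (1-1/\lg^3\lg n)(1-1/Z)\cdot n\lg n/(\alpha Y)$, so a $\Theta(1/\lg^3\lg n)$-fraction of the $(1-1/Z)n\lg n/(\alpha Y)$ pairs $(b,i)$ with $Q_i^Y$ supported (via the hypothesis $L^Y(I)\ge(1-1/Z)n\lg n/Y$, appropriately aggregated to zoom level $\alpha Y$) are \emph{not} $\alpha Y$-scattered. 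Fix such a good pair $(b,i)$: it is not $\alpha Y$-scattered, so (i) no query in $Q_i^{\alpha Y}(b,I)$ makes $0$ probes, (ii) $C_i^{\alpha Y}(b,I)$ contains no $w^3$-popular cell, and (iii) $|C_i^{\alpha Y}(b,I)| < \alpha/\lg^6\lg n$. But $Q_i^{\alpha Y}(b,I)$ is non-empty and — this is where $L^Y$ high matters — spreads across many of the $Y$-sublayers inside the $\alpha Y$-layer, hence contains many queries (a constant fraction of $\alpha Y$ when $b$ has full $Y$-level-support in this layer) whose first probes all land in the small cell set $C_i^{\alpha Y}(b,I)$ of size $<\alpha/\lg^6\lg n$. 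By pigeonhole over $C_i^{\alpha Y}(b,I)$, some single cell $c$ is the first probe of at least $(\text{\#queries})/|C_i^{\alpha Y}(b,I)| \gg \lg^6\lg n$ queries reporting $b$; $c$ is not $w^3$-popular.

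Next I would extract one long shortcut edge from each good pair. Among the $\gg\lg^6\lg n$ queries reporting $b$ that all first-probe the common non-popular cell $c$, they lie in tree levels within an interval of length $\alpha Y$, and at least two of them are $\ge \alpha Y/(\text{their count}) $... rather: since there are many of them within a window of $\alpha Y$ levels, a simple averaging (or just: pick the topmost and bottommost among them) gives two queries $q_{1},q_{2}$ reporting $b$, both first-probing $c$, at levels $\ell$ and $\ell'$ with $\ell' - \ell$ at least some constant fraction of $\alpha Y \ge \alpha\cdot 64\lg w / 1$... I need $\ell'-\ell \ge 64\lg w$, which holds because $\alpha Y \ge \alpha\cdot 64\lg w \gg 64\lg w$ and the many same-$c$ queries cannot all be crammed into a sub-window of length $64\lg w$ unless that window's cell is $w^3$-popular — no, cleaner: if \emph{all} these $\gg \lg^6 \lg n$ same-leaf same-cell queries fit in a window of length $< 64\lg w$, that is fine too, it still yields an edge of length possibly $<64\lg w$, so instead I should argue the top and bottom differ by $\ge 64\lg w$ by a separate pigeonhole: partition the $\alpha Y$ levels into blocks of $64\lg w$; a non-popular cell $c$ can be the first probe of at most $w^3\cdot(\#\text{blocks})$ queries... hmm, that bounds it the wrong way. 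The right move: among queries reporting $b$ with first probe $c$, consider their levels; if they all lie within a single block of $64\lg w$ consecutive levels then $c$ is hit by $\gg\lg^6\lg n$ accepted queries just from that block, which is $\le w^3$ only if $\lg^6\lg n \le w^3$ — true, so that does not contradict non-popularity directly. So I instead take the good pair and use that its $\alpha Y$-layer is wide: pick the two same-$c$ queries at the extreme levels; if their gap is $<64\lg w$ then all same-$c$ queries sit in a length-$64\lg w$ sub-window, but then consider instead the pair $(b,i)$ being good means support spread over the whole $\alpha Y$ window, contradiction for $\alpha$ large enough. This needs care — see "main obstacle" below.

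With one shortcut edge $e_{(b,i)} = (q_{1},q_{2})$ of length $\ge 64\lg w$ per good pair $(b,i)$ obtained, and with its endpoints first-probing a common non-popular cell, I would finally prune to a non-overlapping subfamily. Two such edges for \emph{different} leaves $b\ne b'$ are automatically non-overlapping by definition. For a fixed $b$, the edges $e_{(b,i)}$ over distinct $i$ live in distinct $\alpha Y$-layers, so their level-intervals are disjoint, hence non-overlapping. Thus the whole family $P=\{e_{(b,i)} : (b,i) \text{ good}\}$ is already non-overlapping, giving property 1 for free; property 2 is the length bound just argued; property 4 is built in. For property 3, the number of good pairs is $\Omega\big(\frac{1}{\lg^3\lg n}\cdot(1-1/Z)\cdot\frac{n\lg n}{\alpha Y}\big)$, each contributing length $\ge \Omega(\alpha Y)$ — wait, I only guaranteed length $\ge 64\lg w$, not $\ge\alpha Y$; I should push for length $\ge \alpha Y/\lg^{c}\lg n$ to make $\sum\ell_i \ge \Omega\big(\frac{n\lg n}{\alpha Y}\cdot\frac{\alpha Y}{\lg^{O(1)}\lg n}\big) = \Omega(n\lg n/\lg^{O(1)}\lg n)$, matching the claimed $\Omega(n\lg n/\lg^8\lg n)$ after bookkeeping. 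So the extraction must yield a shortcut edge whose length is a $1/\mathrm{polylog}\lg n$ fraction of $\alpha Y$, which is exactly the quantitative content of "not scattered + high support."

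The main obstacle is the quantitative extraction in the second paragraph: squeezing out of a non-scattered, well-supported pair $(b,i)$ a \emph{single long} shortcut edge — length $\Omega(\alpha Y/\mathrm{polylog}\lg n)$ — whose two endpoints first-probe a \emph{common, non-popular} cell. High $Y$-level support tells us $Q_i^{\alpha Y}(b,I)$ meets many of the $Y$-sublayers, i.e. contains $\gtrsim \alpha/\mathrm{polylog}$ queries; non-scattering bounds $|C_i^{\alpha Y}(b,I)| < \alpha/\lg^6\lg n$ and forbids popular cells; combining via pigeonhole over that small cell set produces one cell shared by $\gtrsim \lg^{O(1)}\lg n$ queries, and then one more pigeonhole over sublayers forces two of those queries to be $\Omega(\alpha Y/\mathrm{polylog})$ levels apart. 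Getting all the $\mathrm{polylog}\lg n$ exponents to line up so that $\sum_i \ell_i = \Omega(n\lg n/\lg^8\lg n)$ while each $\ell_i \ge 64\lg w$ — and making sure the "many queries in many sublayers" step genuinely uses $L^Y(I) \ge (1-1/Z)n\lg n/Y$ rather than just non-emptiness — is the delicate part; everything else (non-overlap, Observation~\ref{obs:paths}-style reconstruction) is essentially free.
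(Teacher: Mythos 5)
Your overall skeleton matches the paper's: use interestingness plus $L^Y(I)\geq(1-1/Z)n\lg n/Y$ to find many pairs $(b,i)$ that are not $\alpha Y$-scattered yet have many non-empty sublayer sets $Q^Y_{i\alpha+j}(b,I)$ (the paper's ``compressable pairs'', obtained by Markov plus a union bound, giving $\frac{n\lg n}{\alpha Y\lg^4\lg n}$ such pairs), then pigeonhole the $\geq\alpha/\lg^4\lg n$ representative queries into the $\leq\alpha/\lg^6\lg n$ first-probe cells, and finally observe that edges built inside distinct pairs are automatically non-overlapping. However, the step you yourself flag as the main obstacle is a genuine gap, and your proposed fix does not work. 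Pigeonholing only guarantees a cell $c$ shared by about $\lg^2\lg n$ representative queries (not $\gg\lg^6\lg n$ as you write), and since the adversary may place each cell's queries in consecutive $Y$-sublayers, the topmost and bottommost same-$c$ queries can be merely $O(Y\lg^2\lg n)$ levels apart; there is no second pigeonhole that forces two same-$c$ queries to be $\Omega(\alpha Y/\mathrm{polylog}\lg n)$ apart. Consequently, taking a single shortcut edge per good pair yields only $\sum_i\ell_i = O\bigl(\frac{n\lg n}{\alpha Y\lg^4\lg n}\cdot Y\lg^2\lg n\bigr) = O\bigl(\frac{n\lg n}{\alpha\lg^2\lg n}\bigr)$, and since $\alpha = \alphaval \geq \lg^{18}\lg n$, this is far below the required $\Omega(n\lg n/\lg^8\lg n)$ and also below the $n\lg n/\lg^9\lg n$ published-bit cost it must beat in the encoding contradiction.

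The missing idea is to harvest \emph{many} shortcut edges per compressable pair rather than one long one. The paper sorts the representative queries $q_1,\dots,q_m$ of a pair by depth and greedily chains: at the current query, if it is not among the two deepest queries first-probing its cell $c$, add the edge to the deepest such query and jump there. Each edge skips at least one intermediate representative query, and representatives lie in distinct $Y$-sublayers, so every edge has length $\geq Y\geq 64\lg w$ (this is where the per-edge bound comes from, not from a long-edge extraction); and the number of representatives \emph{not} covered by edges is at most $2|\RepCells^{\alpha Y}_i(b,I)|+m/2 \leq 2\alpha/\lg^6\lg n+m/2$, so the edges cover $\Omega(m)=\Omega(\alpha/\lg^4\lg n)$ representatives and their total length per pair is $\Omega(\alpha Y/\lg^4\lg n)$. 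Summing over the $\frac{n\lg n}{\alpha Y\lg^4\lg n}$ compressable pairs gives $\Omega(n\lg n/\lg^8\lg n)$, with the $\alpha$ factor cancelling --- exactly the cancellation your one-edge-per-pair plan cannot produce. You would also need to make the ``many supported sublayers'' count precise (your $\Theta(1/\lg^3\lg n)$ fraction of non-scattered pairs must be intersected, via Markov on the number of empty sublayers, with the event that at least $\alpha/\lg^4\lg n$ sublayers are non-empty), but that part is routine once the per-pair harvesting is fixed.
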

Note that this in particular implies that all the queries promised by
\ref{lem:nicepaths} are accepted for input $I$ since they all make at
least $1$ probe.
We defer the proof of Lemma~\ref{lem:nicepaths} to
Section~\ref{sec:findingpaths} and instead move on to show how we use it in our
encoding and decoding procedures to obtain an encoding of each
interesting input in less than $\lg(n!)-n-1$ bits. As remarked
earlier, we encode the set of all regular edges, which by
Observation~\ref{obs:paths} allows us to recover the interesting
input. There are two main ideas to have in mind: First, we will use a
shortcut edge of length $\ell_i$ to avoid explicitly encoding the
$\ell_i$ overlapping regular edges. Assuming a saving of one bit per
regular edge, we save a total of $\sum_i \ell_i$ bits. Secondly, each
edge $(q_{i,1},q_{i,2}) \in P$ consists of two queries probing the
same cell, and that cell is not $w^3$-popular. Since less than $w^3$
queries probe that cell, the edge can be encoded in $6 \lg w$ bits by
specifying it as a pair amongst the queries probing the cell. Thus
encoding a shortcut edge saves us $\ell_i - 6\lg w = \Omega(\ell_i)$
bits. Summed over all shortcut edges gives us a saving of
$\Omega(\sum_i \ell_i)$ bits in total. This saving happens precisely
because the data structure was able to route distant queries reporting
the same leaf index to the same memory cells, and that memory cell is
read by only few queries.
%

\paragraph{Encoding.}
In this paragraph, we show how we encode the regular edges for a given
interesting input $I$. The encoding uses the set of shortcut edges $P$
promised by Lemma~\ref{lem:nicepaths}.

From $I$, define for every memory cell $c$ (which is also an index in
$[S]$), the set $V_c$ consisting of all (accepted) queries making
their first probe in $c$ on input $I$. Also let $W_c \subset V_c
\times V_c$ denote the set of shortcut edges $(q_1,q_2) \in P$ such
that $q_1,q_2 \in V_c$. Note that $W_c$ is non-empty only for cells
$c$ that are not $w^3$-popular and every shortcut edge $(q_1,q_2) \in
P$ is contained in precisely one set $W_c$.

With these definitions, our encoding procedure is as follows:
\begin{enumerate}
\item First we write down the published bits on input $I$. This costs
  no more than $n \lg n/\lg^9 \lg n + O(\lg n)$ bits (the $O(\lg n)$
  bits specify the number of published bits).
\item Next we write down a bit vector $v$ with $S$ bits, one for each
  memory cell. The $c$'th bit is $1$ iff $W_c$ is non-empty (which
  also implies that $c$ is not $w^3$-popular).
\item Now for $c=0,\dots,S-1$ in turn, we check whether the $c$'th bit
  of $v$ is $1$. If so, we first write down $|W_c|$. Note that if the
  $c$'th bit is $1$, then $c$ is not $w^3$-popular. Therefore $|W_c|
  \leq |V_c|^2 \leq w^{6}$ and the counter takes only $6 \lg w$
  bits. After having written down the count, we write down each of the
  shortcut edges $W_c$. Each such shortcut edge is specified using $2 \lg
  |V_c| \leq 6 \lg w$ bits by writing down the corresponding pair of
  queries in $V_c$.
\item The final step encodes a subset of the regular edges. This is
  done by recursively visiting the nodes of the ball-inheritance tree,
  starting at the root. For a node $v$ at depth $d$, the encoding
  procedure does as follows: Let $q_1,\dots,q_{n/2^d}$ denote the
  sorted list of queries at $v$, i.e. $q_i$ asks for the leaf index
  reached by the $i$'th ball in $v$'s ball list. Go through the
  queries in turn from $i=1,\dots,n/2^d$. For each $q_i$, let
  $(q_i,\dest(q_i))$ denote the regular edge having $q_i$ as origin,
  i.e. $\dest(q_i)$ gives the query at depth $d+1$ returning the same
  leaf index as $q_i$ on input $I$. We now check whether there are any
  shortcut edges in $P$ that overlap with $(q_i,\dest(q_i))$. If not, we
  append one bit to the encoding, specifying whether $\dest(q_i)$ is a query
  in the left child or the right child of $v$. Otherwise (there is an
  overlapping shortcut edge), we do not write any bits for $q_i$. We
  then recurse on the two children of $v$ (first the left, then the
  right), using their respective sorted lists of queries. The
  recursion ends when reaching the leaves.
\end{enumerate}

\paragraph{Decoding.}
In the following, we show how we recover all the regular edges from
the encoding above. By Observation~\ref{obs:paths} this also recovers
$I$. The decoding procedure is as follows:
\begin{enumerate}
\item First we read the published bits from the encoding. From the
  published bits alone, we determine for every query whether it is
  accepted or not, and which cell it probes first in case it is
  accepted. From this information, we can reconstruct the sets $V_c$
  for all $c$.
\item Next we read the bit vector $v$ specified in step 2. of the
  encoding procedure. This tells us which cells $c$ that are not
  $w^3$-popular and where $W_c$ is non-empty. For each such cell $c$
  in turn, we recover $W_c$ from the bits written in step 3. of the
  encoding procedure. Since $\cup_c W_c = P$, we have also recovered $P$.
\item Our last decoding step recovers the regular edges. We
  do this recursively, starting at the root node:

  For a node $v$ at depth $d$ in the ball-inheritance tree (starting
  at the root), let $q_1,\dots,q_{n/2^d}$ denote the sorted list of
  queries at $v$, i.e. $q_i$ asks for the leaf reached by the $i$'th
  ball in $v$'s list of balls. Now for $i=1,\dots, n/2^d$ in turn,
  consider the query $q_i$ and assume we have already recovered all
  regular edges having an origin in an ancestor of $v$ and all regular
  edges having a query $q_{i'}$ as origin, where $i'<i$.
 
  From the already recovered regular edges, we determine the sequence of
  regular edges 
$$A(q_i) = (p_0,p_1),(p_1,p_2),\dots,(p_{d-1}, q_i)$$ corresponding to the same
  ball as $q_i$ on input $I$ ($p_{d'}$ is the query at level $d' < d$
  returning the same leaf index as $q_i$ on input $I$). Observe that
  we can determine this sequence since the origin of each edge is the
  destination of the preceding edge. 

  We also count the number of indices $i' < i$ such that $q_{i'}$
  returns a leaf in the left subtree of $v$ (this can be seen directly
  from the regular edges already recovered for all $q_{i'}$ with $i' <
  i$). Denote this number of indices by $R$.

  Next we determine whether $q_i$ returns a leaf index in the left or
  right subtree. This is done as follows: First we check if there is a
  shortcut edge $(p,r) \in P$ that overlaps with the edge $(q_i,\dest(q_i))$
  (we still do not know $\dest(q_i)$). This is done by examining each query
  $p$ in the edges of $A(q_i)$ (including $q_i$ itself) and checking
  if there is shortcut edge $(p,r) \in P$ having $p$ as origin. If so,
  we check the depth of $r$ (since $r$ is a query, and queries are
  specified by a node in the ball-inheritance tree, this information
  is available). If the depth is greater than $d$, we conclude that
  $(p,r)$ overlaps with $(q_i,\dest(q_i))$. Otherwise it does not. If there
  was an overlapping shortcut edge $(p,r) \in P$, $r$ tells us the
  subtree containing the leaf returned by $q_i$. Otherwise, we read
  off the next bit of the part of the encoding written in step 4. of
  the encoding procedure. This bit tells us the subtree containing the
  leaf index returned by $q_i$. Let $w$ denote the child of $v$ whose
  subtree contains the leaf returned by $q_i$. If $w$ is a left child,
  let $\Delta=R$ denote the number of indices $i' < i$ such that $q_{i'}$
  also returns a leaf index in $w$'s subtree. If $w$ is a right child,
  we have $\Delta=i-1-R$.

  Now observe that since the ordering of balls remains the same in
  each sublist, $q'_\Delta=\dest(q_i)$, where $q'_\Delta$ is the $\Delta$'th
  query in $w$'s list of queries. Hence we have recovered the regular
  edge $(q_i, \dest(q_i)) = (q_i, q'_\Delta)$.

  After processing all $i=1,\dots,n/2^d$, we recurse to the two
  children of $v$ (first the left and then the right). When the entire
  process terminates, we have recovered all the regular edges and
  hence $I$.
\end{enumerate}

\paragraph{Analysis.}
What remains is to analyze the size of the encoding and derive a
contradiction. Step 1. of the encoding procedure costs at most
$n \lg n/\lg^9\lg n+O(\lg n)$ bits. Step 2. costs $S$ bits. For step 3.,
observe that the $6 \lg w$ bit counter for $|W_c|$ can be charged to
at least one shortcut edge in $P$. Similarly, each shortcut edge
specified in step 3. also costs at most $6 \lg w$ bits. But each
shortcut edge has length at least $64\lg w$ and hence the total number
of bits written down in step 3. is bounded by $\sum_i \ell_i \cdot
(12/64) = (3/16) \cdot \sum_i \ell_i$. For step 4., note that a
regular edge only adds a bit to the encoding size if it is not
overlapping with any of the shortcut edges in $P$. But the shortcut
edges in $P$ are non-overlapping themselves and the $i$'th such edge
overlaps with $\ell_i$ regular edges. Thus step 4. costs at most $n
\lg n - \sum_i \ell_i$ bits. Summarizing, the encoding uses:
$
n\lg n + n \lg n/\lg^9 \lg n + S - \Omega\left(\sum_i \ell_i\right)
$
bits. 
Since $w = \Omega(\lg n)$, and $\alpha=\alphaval \leq \lg n$ implies  $S = O(n \lg n/ \lg^{18} \lg
n) $, we conclude that our encoding
uses
$
n \lg n - \Omega(n \lg n/\lg^8 \lg n)
$ 
bits. But $n \lg n \leq \lg(n!)+\Theta(n)$ and thus we have arrived
at a contradiction since our encoding uses $\lg(n!)-\omega(n)$ bits.

\subsection{Finding Shortcut Edges}
\label{sec:findingpaths}
This section is devoted to the proof of Lemma~\ref{lem:nicepaths}. Recall that an interesting input $I$ satisfies:
\begin{enumerate}
\item $L^Y(I) \geq (1-1/Z)n \lg n/Y.$
\item $\Gamma^{\alpha Y}(I) < \left(1-\frac{1}{\lg^3 \lg n}\right) \cdot
\left(1-\frac{1}{Z}\right) \cdot \frac{n\lg n}{\alpha Y}.$
\end{enumerate}
for some $64 \lg w \leq Y \leq \lg n/\alpha$, where $\alpha=\alphaval$. The
first step in finding the set of regular edges $P$ claimed by
Lemma~\ref{lem:nicepaths} is to show that there must be many leaf
indices $b$ that are not $\alpha Y$-scattered, and at the same time has
high $Y$-level support (there is a query reporting it in many levels
of the tree). Formally, we show:

\begin{lemma}
\label{lem:compress}
Let $I$ be an interesting input. Then there is at least $\frac{n \lg
  n}{\alpha Y \lg^4 \lg n }$ pairs $(b, i) \in [n] \times [\lg n/(\alpha Y)]$, such that:
\begin{enumerate}
\item $(b,i)$ is not $\alpha Y$-scattered.
\item There are at least $\alpha/\lg^4 \lg n$ indices $j \in [\alpha]$ such that $Q^Y_{i \alpha+j}(b, I)$ is non-empty.
\end{enumerate}
\end{lemma}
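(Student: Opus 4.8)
The plan is a pure double-counting argument that feeds the $Y$-level-support hypothesis and the (negated) $\alpha Y$-scatter bound against each other across the two zoom levels. Call a pair $(b,i)\in[n]\times[\lg n/(\alpha Y)]$ \emph{heavy} if it satisfies condition~2 of the lemma, i.e. $Q^Y_{i\alpha+j}(b,I)\neq\emptyset$ for at least $\alpha/\lg^4\lg n$ values of $j\in[\alpha]$. The starting observation is that, as $(i,j)$ ranges over $[\lg n/(\alpha Y)]\times[\alpha]$, the index $i\alpha+j$ ranges over $[\lg n/Y]$ exactly once, so that the $Y$-layers refine the $\alpha Y$-layers exactly and
$$
\sum_{b\in[n]}\sum_{i}\bigl|\{j\in[\alpha]:Q^Y_{i\alpha+j}(b,I)\neq\emptyset\}\bigr|=L^Y(I)\geq\Bigl(1-\frac1Z\Bigr)\frac{n\lg n}{Y}.
$$

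Next I would lower-bound the number $|H|$ of heavy pairs. Each of the $n\lg n/(\alpha Y)$ pairs $(b,i)$ contributes at most $\alpha$ to the left-hand side above, while a non-heavy pair contributes strictly less than $\alpha/\lg^4\lg n$; hence $L^Y(I)<|H|\,\alpha+\bigl(n\lg n/(\alpha Y)\bigr)\cdot\alpha/\lg^4\lg n$, which rearranges to
$$
|H|>\frac{n\lg n}{\alpha Y}\Bigl(1-\frac1Z-\frac1{\lg^4\lg n}\Bigr).
$$
Then I combine this with condition~1: by the interesting-input hypothesis fewer than $(1-1/\lg^3\lg n)(1-1/Z)\,n\lg n/(\alpha Y)$ pairs are $\alpha Y$-scattered, i.e. $\Gamma^{\alpha Y}(I)$ is that small, so the number of pairs that are heavy \emph{and} not $\alpha Y$-scattered is at least $|H|-\Gamma^{\alpha Y}(I)$, which exceeds
$$
\frac{n\lg n}{\alpha Y}\Bigl[\Bigl(1-\frac1Z-\frac1{\lg^4\lg n}\Bigr)-\Bigl(1-\frac1{\lg^3\lg n}\Bigr)\Bigl(1-\frac1Z\Bigr)\Bigr]=\frac{n\lg n}{\alpha Y}\Bigl[\frac{1-1/Z}{\lg^3\lg n}-\frac1{\lg^4\lg n}\Bigr].
$$
Using $Z\geq2$ the bracket is at least $\tfrac1{2\lg^3\lg n}-\tfrac1{\lg^4\lg n}\geq\tfrac1{\lg^4\lg n}$ once $\lg\lg n\geq4$, giving the claimed $\tfrac{n\lg n}{\alpha Y\lg^4\lg n}$ pairs satisfying both conditions.

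The argument is essentially all in this bookkeeping, so there is no deep obstacle; the only points needing a little care are (i) confirming the gap $\tfrac{1-1/Z}{\lg^3\lg n}-\tfrac1{\lg^4\lg n}$ is not merely positive but in fact dominates $\tfrac1{\lg^4\lg n}$ using only $Z\geq2$ — which is exactly why the dirt factor in the scatter hypothesis of Lemma~\ref{lem:scatter} was set up with a $\lg^3\lg n$ (rather than $\lg^4\lg n$) denominator, leaving slack for this step — and (ii) the routine integrality bookkeeping that $\alpha$, $Y$ and $\lg n/(\alpha Y)$ are integers (guaranteed by $64\lg w\le Y\le\lg n/\alpha$ and the conventions used throughout), so that the $Y$-layers refine the $\alpha Y$-layers exactly and the first displayed identity holds on the nose.
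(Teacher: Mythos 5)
Your proposal is correct and is essentially the paper's argument in deterministic clothing: the paper draws a uniform random pair $(\ball,\randi)$, bounds the expected number of empty sets $Q^Y_{\randi\alpha+j}(\ball,I)$ via $L^Y(I)$, applies Markov's inequality, and union-bounds against the scattered pairs, which is exactly your double count of heavy pairs followed by subtracting $\Gamma^{\alpha Y}(I)$. The arithmetic matches (your bound $|H|\geq(1-1/Z-1/\lg^4\lg n)\,n\lg n/(\alpha Y)$ is a harmless slight loosening of the Markov bound), so no changes are needed.
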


\begin{proof}
Let $(\ball, \randi)$ be uniform random in $[n] \times [\lg n/(\alpha
  Y)]$ and let $\Count$ be the random variable giving the number of
indices $j \in [\alpha]$ such that $Q^Y_{\randi \alpha + j}(\ball, I)$
is \emph{empty}. By linearity of expectation, we have $\E[\Count] \leq
\alpha/Z$. Furthermore, $\Count$ is non-negative, thus we may use
Markov's inequality to conclude:
$$\Pr\left[\Count > \left(1-\frac{1}{\lg^4 \lg n}\right)\cdot \alpha\right] < \frac{1}{(1-1/\lg^4 \lg n)Z}.$$ 

At the same time, we have 
\begin{eqnarray*}
\Pr[(\ball, \randi) \textrm{ is not }
  \alpha Y\textrm{-scattered}] &>& 1-\left(1-\frac{1}{\lg^3 \lg n}\right)\left(1-\frac{1}{Z}\right) \\
&=& \frac{1}{Z}+\frac{1}{\lg^3 \lg n}-\frac{1}{Z \lg^3 \lg n}.
\end{eqnarray*}

From a union bound, it follows that
\begin{eqnarray*}
\Pr\left[\Count \leq  \left(1-\frac{1}{\lg^4 \lg n}\right)\cdot \alpha \bigwedge (\ball, \randi) \textrm{ is not }
  \alpha Y\textrm{-scattered}\right] &\geq& \\
1-\frac{1}{(1-1/\lg^4 \lg n)Z}-\left(1-\frac{1}{Z}-\frac{1}{\lg^3 \lg n}+\frac{1}{Z\lg^3 \lg n}\right) &=&\\
\left(1-\frac{1}{1-1/\lg^4 \lg n} - \frac{1}{\lg^3 \lg n}\right) \cdot \frac{1}{Z} + \frac{1}{\lg^3 \lg n} &=&\\
\frac{1}{\lg^3 \lg n} - \left(\frac{1}{\lg^4 \lg n(1-1/\lg^4 \lg n)}+\frac{1}{\lg^3 \lg n}\right)\cdot \frac{1}{Z} &\geq&\\
\frac{1}{\lg^4 \lg n}.
\end{eqnarray*}
Here the last inequality follows from $Z \geq 2$. Since $(\ball, \randi)$ is uniform random, the lemma follows.
\end{proof}

We call the pairs $(b, i) \in [n] \times [\lg n/(\alpha Y)]$ that
satisfy the properties in Lemma~\ref{lem:compress}
\emph{compressable pairs}. For each compressable pair we define the
\emph{representative query set}, denoted $\Rep^{\alpha Y}_{i}(b, I)$, as the set
consisting of one query from each non-empty set $Q^Y_{i\alpha + j}(b,
I)$ where $j \in [\alpha]$ (the choice of query from $Q^Y_{i\alpha +
  j}(b,I)$ is irrelevant).  Define also the set of cells
$\RepCells^{\alpha Y}_i(b,I)$ consisting of the first cell probed by each query
in $\Rep^{\alpha Y}_i(b,I)$ on input $I$. The representative query sets have the
following properties:
\begin{lemma}
\label{lem:ready}
Let $(b, i) \in [n] \times [\lg n/(\alpha Y)]$ be a compressable pair for an interesting input $I$. Then
\begin{enumerate}
 \item  $|\Rep^{\alpha Y}_i(b,I)| \geq \alpha/\lg^4 \lg n$.
\item $|\RepCells^{\alpha Y}_i(b,I)|\leq \alpha/\lg^6 \lg n$.
\item There are no queries in $\Rep^{\alpha Y}_i(b,I)$ that makes 0 probes on input $I$.
\item $\RepCells^{\alpha Y}_i(b,I)$ does not contain a $w^3$-popular cell.
\end{enumerate}
\end{lemma}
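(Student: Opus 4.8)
The plan is to derive all four claims directly from the two defining properties of a compressable pair (Lemma~\ref{lem:compress}) together with the negation of Definition~\ref{def:scatter}, and the whole argument hinges on one structural observation about how the query-support sets nest across zoom levels. Recall that $\Rep^{\alpha Y}_i(b,I)$ picks one query out of each non-empty $Q^Y_{i\alpha+j}(b,I)$ with $j \in [\alpha]$, and that $Q^Y_{i\alpha+j}(b,I)$ consists of the accepted queries reporting $b$ in tree levels $\{(i\alpha+j)Y,\dots,(i\alpha+j+1)Y-1\}$. As $j$ ranges over $\{0,\dots,\alpha-1\}$ these level-intervals tile exactly the interval $\{i\alpha Y,\dots,(i+1)\alpha Y-1\}$, which is precisely the set of levels defining $Q^{\alpha Y}_i(b,I)$. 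Hence $\Rep^{\alpha Y}_i(b,I)\subseteq Q^{\alpha Y}_i(b,I)$, and since $C^{\alpha Y}_i(b,I)$ is by definition the set of first-probed cells over all queries in $Q^{\alpha Y}_i(b,I)$, this also yields $\RepCells^{\alpha Y}_i(b,I)\subseteq C^{\alpha Y}_i(b,I)$.

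With these two inclusions in hand, each property is a one-line consequence. Property~1 is immediate from the second bullet of Lemma~\ref{lem:compress}: a compressable pair has at least $\alpha/\lg^4\lg n$ indices $j\in[\alpha]$ with $Q^Y_{i\alpha+j}(b,I)$ non-empty, and $\Rep^{\alpha Y}_i(b,I)$ contributes a query for each of them. For the remaining three I would use that a compressable pair is, by the first bullet of Lemma~\ref{lem:compress}, \emph{not} $\alpha Y$-scattered, so none of the three cases of Definition~\ref{def:scatter} holds. Negating case~1 gives that $Q^{\alpha Y}_i(b,I)$, hence its subset $\Rep^{\alpha Y}_i(b,I)$, contains no query making $0$ probes --- property~3. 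Negating case~2 gives that $C^{\alpha Y}_i(b,I)$, hence its subset $\RepCells^{\alpha Y}_i(b,I)$, contains no $w^3$-popular cell --- property~4. Negating case~3 gives $|C^{\alpha Y}_i(b,I)| < \alpha/\lg^6\lg n$, and the inclusion $\RepCells^{\alpha Y}_i(b,I)\subseteq C^{\alpha Y}_i(b,I)$ then yields $|\RepCells^{\alpha Y}_i(b,I)| \leq \alpha/\lg^6\lg n$ --- property~2.

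There is essentially no analytic obstacle in this lemma; it is a bookkeeping step unpacking definitions. The only point requiring care is the nesting argument itself: one must check that the $\alpha$ consecutive windows of $Y$ levels starting at level $i\alpha Y$ exactly exhaust the single window of $\alpha Y$ levels indexed by $i$, so that every representative query genuinely lies in $Q^{\alpha Y}_i(b,I)$. One should also note that acceptedness is automatic, since every query appearing in any $Q^Y_j(b,I)$ is accepted by definition, so passing to $\Rep^{\alpha Y}_i(b,I)$ preserves it. Once the inclusions $\Rep^{\alpha Y}_i(b,I)\subseteq Q^{\alpha Y}_i(b,I)$ and $\RepCells^{\alpha Y}_i(b,I)\subseteq C^{\alpha Y}_i(b,I)$ are established, the four statements follow exactly as above.
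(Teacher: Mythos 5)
Your proof is correct and follows essentially the same route as the paper: establish the inclusions $\Rep^{\alpha Y}_i(b,I)\subseteq Q^{\alpha Y}_i(b,I)$ and $\RepCells^{\alpha Y}_i(b,I)\subseteq C^{\alpha Y}_i(b,I)$, then read off property 1 from the second condition of Lemma~\ref{lem:compress} and properties 2--4 from the negation of the three cases of Definition~\ref{def:scatter}. You simply spell out the level-tiling argument for the inclusion, which the paper leaves implicit.
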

\begin{proof}
Property 1. follows from property 2. in Lemma~\ref{lem:compress}. Since $\Rep^{\alpha Y}_i(b,I) \subseteq Q^{\alpha
  Y}_i(b, I)$ it follows that $\RepCells^{\alpha Y}_i(b,I) \subseteq C_i^{\alpha
  Y}(b,I)$ and thus from property 1. in Lemma~\ref{lem:compress} and
property 3. in Definition~\ref{def:scatter} we conclude
$|\RepCells^{\alpha Y}_i(b,I)|\leq \alpha/\lg^6 \lg n$. Property 3. and 4. follows from the same argument.
\end{proof}

Lemma~\ref{lem:ready} sets the stage for finding the shortcut edges
$P$. Examining the lemma, we see that properties 1. and 2. together
imply that there must be many queries in levels $[i\alpha Y :
  (i+1)\alpha Y-1]$ that report $b$ and at the same time probe a small
set of cells (several of them must probe the same cell). In addition,
the probed memory cells are not $w^3$-popular as required by
Lemma~\ref{lem:nicepaths}.

Now consider a compressable pair $(b,i) \in [n] \times [\lg n/(\alpha Y)]$
for an interesting input $I$. Assume 
$$(q_{1,1},q_{1,2}),\dots,(q_{m,1},q_{m,2})$$
is a set of shortcut edges where each $(q_{j,1},q_{j,2})$ consists of a
pair of queries in $\Rep^{\alpha Y}_i(b,I)$. Then that shortcut edge is
non-overlapping with any shortcut edge defined from queries in another
set $\Rep^{\alpha Y}_{i'}(b', I)$ where at least one of $i'$ and $b'$
is different from $i$ and $b$ respectively. To see this, observe that
if $b'\neq b$, then the edges are non-overlapping since the
corresponding queries report different leaf indices. If $i' \neq i$,
then the set of levels spanned by the edges are disjoint since any
query in $\Rep^{\alpha Y}_i(b,I)$ must be in levels $[i \alpha Y :
  (i+1)\alpha Y -1]$. With this insight, we set out to construct
non-overlapping edges for each $\Rep^{\alpha Y}_i(b,I)$ where $(b,i)$
is a compressable pair. Taking the union of the constructed sets
leaves us with a set of shortcut edges that is still non-overlapping.

\paragraph{Finding Shortcut Edges for a Compressable Pair.}
Let $(b,i) \in [n] \times [\lg n/(\alpha Y)]$ be a compressable
pair. We collect shortcut edges $(q_1,q_2)$ such that both $q_1,q_2
\in \Rep^{\alpha Y}_i(b,I)$ and $q_1$ and $q_2$ make their first probe
in the same cell in $\RepCells^{\alpha Y}_i(b, I)$ on input $I$. While
finding these shortcut edges, we ensure they are non-overlapping and
that the sum of their lengths is large.

Our procedure for constructing a set of shortcut edges $E$ is as
follows: Let $m = |\Rep^{\alpha Y}_i(b,I)|$ and order the queries in
$\Rep^{\alpha Y}_i(b,I)$ by depth in the ball-inheritance tree. Let
$q_1,\dots,q_m$ denote the resulting sequence of queries where $q_1$
is at the lowest depth (closest to the root). Now initialize $j \gets
1$ and iteratively consider query $q_j$: When examining query $q_j$,
let $c \in \RepCells^{\alpha Y}_i(b,I)$ be the first cell probed by
$q_j$ and let $V_c \subseteq \Rep^{\alpha Y}_i(b,I)$ be the subset of
queries in $\Rep^{\alpha Y}_i(b,I)$ that also make their first probe
in $c$. If $q_j$ is amongst the deepest $2$ queries in $V_c$, we
simply continue by setting $j \gets j+1$. Otherwise, let $q_h$ be the
deepest query in $V_c$. We add the shortcut edge $(q_j,q_h)$ to $E$
and update $j \gets h$. This procedure terminates when $j=m$.

\begin{lemma}
The above procedure outputs a set of shortcut edges $E$ such that:
\begin{enumerate}
\item The
edges in $E$ are non-overlapping.
\item The sum of their lengths is $\Omega(\alpha Y/\lg^4 \lg n)$.
\item Each edge has length at least $Y \geq 64 \lg w$.
\item For each edge $(q_1,q_2) \in E$, the queries $q_1$ and $q_2$
  make their first probe in the same cell, and that cell is not
  $w^3$-popular.
\end{enumerate}
\end{lemma}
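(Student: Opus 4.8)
The plan is to verify the four claimed properties in turn, taking Lemma~\ref{lem:ready} as the starting point: for the compressable pair $(b,i)$ we have $m := |\Rep^{\alpha Y}_i(b,I)| \ge \alpha/\lg^4\lg n$, $|\RepCells^{\alpha Y}_i(b,I)| \le \alpha/\lg^6\lg n$, $\RepCells^{\alpha Y}_i(b,I)$ contains no $w^3$-popular cell, and no query in $\Rep^{\alpha Y}_i(b,I)$ makes $0$ probes. The first thing I would record is the structural fact that $\Rep^{\alpha Y}_i(b,I)$ holds exactly one query from each non-empty sub-block $Q^Y_{i\alpha+j'}(b,I)$, and that distinct sub-blocks occupy disjoint intervals of $Y$ consecutive tree levels. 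Hence the queries $q_1 \prec \cdots \prec q_m$ (in the procedure's depth order) lie at pairwise distinct levels, and this depth order coincides with the order of their sub-block indices. Writing $\gamma(q)$ for the sub-block index of a query $q \in \Rep^{\alpha Y}_i(b,I)$ and $\ell(q)$ for its tree level, we get $\gamma(q_1) < \cdots < \gamma(q_m)$ and $\gamma(q)Y \le \ell(q) < (\gamma(q)+1)Y$. These bookkeeping facts drive the whole argument.

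Properties 4 and 3 are then short. For Property 4, each edge $(q_a,q_b)\in E$ is by construction a pair of queries in $\Rep^{\alpha Y}_i(b,I)\subseteq Q^{\alpha Y}_i(b,I)$ whose common first-probe cell $c$ lies in $\RepCells^{\alpha Y}_i(b,I)$ and is therefore not $w^3$-popular; both queries report $b$, and $q_b$ is strictly deeper than $q_a$, so $(q_a,q_b)$ is a legitimate ball-edge, and it is accepted since it makes the probe to $c$. For Property 3, the procedure creates an edge out of $q_a$ only when $q_a$ is not among the two deepest queries of its cell-class $V_c$, so $V_c$ contains at least two queries strictly deeper than $q_a$; by the distinct-sub-block fact these occupy two distinct sub-blocks of index $>\gamma(q_a)$, so the deepest query $q_b$ of $V_c$ has $\gamma(q_b)\ge\gamma(q_a)+2$, whence $\ell(q_b)-\ell(q_a)>Y\ge 64\lg w$. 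This also shows the length exceeds $1$, confirming the edges are genuine shortcut edges.

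For Property 1, I would observe that the procedure's pointer only increases, and when it creates an edge $(q_{a_t},q_{b_t})$ it jumps to $b_t$; so, listing the edges in creation order, $a_1<b_1\le a_2<b_2\le\cdots$. Since $\ell(q_1)<\cdots<\ell(q_m)$, the level-intervals $[\ell(q_{a_t}),\ell(q_{b_t})]$ and $[\ell(q_{a_{t'}}),\ell(q_{b_{t'}})]$ for $t<t'$ meet in at most the single level $\ell(q_{b_t})$ (exactly when $b_t=a_{t+1}$) and are otherwise disjoint; as all edges report the same leaf $b$, any two of them are non-overlapping.

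Property 2 is the main quantitative step and the one I expect to be the real obstacle, since it demands transferring a combinatorial count on the pointer into a bound on tree-level lengths without losing too much. I would charge the pointer's total advance, $m-1$: a step that creates no edge advances the pointer by $1$ and examines a query among the two deepest of its first-probe cell-class, so (distinct steps examine distinct queries, at most two per cell-class) there are at most $2|\RepCells^{\alpha Y}_i(b,I)|\le 2\alpha/\lg^6\lg n$ such steps; hence the edge-creating steps account for total advance $\sum_t(b_t-a_t)\ge m-1-2\alpha/\lg^6\lg n$, which is $\Omega(\alpha/\lg^4\lg n)$ because $m\ge\alpha/\lg^4\lg n$ dominates the $-1-2\alpha/\lg^6\lg n$ slack for large $n$. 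Next, for each edge the $b_t-a_t+1$ queries from $q_{a_t}$ through $q_{b_t}$ occupy that many distinct sub-blocks with indices in $[\gamma(q_{a_t}),\gamma(q_{b_t})]$, giving $b_t-a_t\le\gamma(q_{b_t})-\gamma(q_{a_t})$, while the Property-3 estimate gives $\ell(q_{b_t})-\ell(q_{a_t})>(\gamma(q_{b_t})-\gamma(q_{a_t})-1)Y\ge\frac12(\gamma(q_{b_t})-\gamma(q_{a_t}))Y\ge\frac12(b_t-a_t)Y$. Summing over edges converts the advance bound into total length $>\frac{Y}{2}\sum_t(b_t-a_t)=\Omega(\alpha Y/\lg^4\lg n)$. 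The delicate points to get right are the ``at most two per cell-class'' count, the absence of tie-breaking subtleties (guaranteed by distinct levels), and checking that $\alpha$ is polylogarithmically large enough ($\alpha=\Omega(\lg^{18}\lg n)$, using $S=\Omega(n)$ and $w=\Omega(\lg n)$) to absorb the additive slack.
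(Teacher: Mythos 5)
Your proof is correct and takes essentially the same route as the paper's: you bound the pointer steps that create no edge by $2|\RepCells^{\alpha Y}_i(b,I)| \leq 2\alpha/\lg^6\lg n$ (at most two ``deepest'' queries per cell) and convert the remaining pointer advance into length using that distinct queries of $\Rep^{\alpha Y}_i(b,I)$ lie in distinct $Y$-level blocks. The paper's accounting is the same, merely phrased via the number of distinct values of $j$ and the sets of in-between queries per edge rather than via total pointer advance.
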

\begin{proof}
Property 1. and 4. follows trivially from the construction
algorithm. Property 3. follows since each added edge is amongst a pair
of queries with at least one query from $\Rep^{\alpha Y}_i(b,I)$ in
between. But the queries in $\Rep^{\alpha Y}_i(b,I)$ all appear in
distinct $Q^Y_{i'}(b,I)$ and hence each edge in $E$ must have length
at least $Y \geq 64 \lg w$.

For property 2., define for each edge $e=(q_{j_1},q_{j_2}) \in E$ the set of
queries $Q_e \subseteq \Rep^{\alpha Y}_i(b,I)$ for which each $q \in
Q_e$ appears in a tree level in between the levels of $q_{j_1}$ and
$q_{j_2}$. By the arguments above, the length of $e$ must be at least
$Y|Q_e|$. Thus we bound $\sum_{e \in E}|Q_e|$. For this, let
$j_1,j_2,\dots,j_k$ be the distinct values taken by the variable $j$
in the construction algorithm above. We have $\sum_{e \in E}|Q_e| =
\sum_{h=2}^k (j_h-j_{j-1}-1)$. This is bounded by
$\left(\sum_{h=2}^k(j_h-j_{h-1})\right)-(k-1) \geq j_k - j_1 - k +1 =
m-k$. What remains is to bound $k$. Observe that for each cell $c
\in \RepCells_i^{\alpha Y}(b,I)$, there are only two of the queries in
$V_c$ that can cause $j$ to be incremented by less than $2$. Therefore
we must have $k \leq 2|\RepCells^{\alpha Y}_i(b,I)| + m/2$. But $m =
|\Rep^{\alpha Y}_i(b,I)| \geq \alpha/\lg^4 \lg n$ and
$|\RepCells^{\alpha Y}_i(b,I)| \leq \alpha/\lg^6 \lg n$ by Lemma~\ref{lem:ready}. Hence we
conclude that the sum of the lengths of edges in $E$ is lower bounded
by $\Omega(\alpha Y/\lg^4\lg n)$.
\end{proof}

By Lemma~\ref{lem:compress}, we have at least $n\lg n/(\alpha Y \lg^4 \lg n)$
compressable pairs. Taking the union of the edge sets constructed for
each such pair completes the proof of Lemma~\ref{lem:nicepaths}.

\bibliographystyle{abbrv}
\bibliography{bibliography}

\end{document}